\documentclass[11pt]{article}

\usepackage[margin=2.5cm]{geometry}
\usepackage{authblk}

\usepackage{amssymb,amsmath,amsthm,amsfonts,thmtools}
\usepackage{color}
\usepackage{algorithm}
\usepackage[noend]{algpseudocode}
\usepackage{yfonts}
\usepackage{mathrsfs}

\usepackage{graphicx}
\usepackage{xspace}

\usepackage{times}
\usepackage{verbatim}
\usepackage{enumitem}
\usepackage{graphicx,wrapfig,lipsum}
\usepackage[font=small]{caption}
\usepackage{subcaption}
\usepackage[english]{babel}
\usepackage[utf8]{inputenc}

\usepackage{url}
\usepackage[all]{xy}
\usepackage{rotating}
\usepackage{ifpdf}
\usepackage{tcolorbox}
\usepackage{lipsum}

\usepackage{mdframed}             
\usepackage{xifthen} 
\usepackage{mathtools}  

\DeclareMathAlphabet\mathbfcal{OMS}{cmsy}{b}{n}

\usepackage{comment}
\usepackage[T1]{fontenc}

\usepackage[hidelinks]{hyperref} 

\usepackage[section]{placeins}
\usepackage{float}
\usepackage[title]{appendix}
\usepackage{graphicx}
\usepackage{tikz}
\usetikzlibrary{automata,positioning}
\usepackage{dsfont}
\usepackage{xspace}
\usepackage{enumitem}
\setdescription{font=\normalfont}

\usepackage[normalem]{ulem}
\usepackage{xfrac}

\newcommand{\barQ}{\bar{Q}}  
\newcommand{\barR}{\bar{R}} 
\newcommand{\barS}{\bar{S}}

\newcommand{\calE}{\ensuremath{\mathcal E}\xspace}

\newcommand{\calQ}{\ensuremath{\mathcal Q}\xspace}

\newcommand{\tildeO}{{\tilde{O}}}

\newcommand{\tildeOmega}{{\tilde{\Omega}}}

\colorlet{darkgreen}{green!45!black}

\newcommand{\MultisetMAC}{{\textsf{MMAC}}}

\newcommand{\Flush}{{\textsc{Flush}}}

\newcommand{\RoundRobin}{{\textsc{RoundRobin}}}

\newcommand{\protP}{\mathbb{P}}
\newcommand{\TripleSel}{{\textsc{TripleSel}}}
\newcommand{\TripleSelFull}[3]{{\TripleSel}(#1,#2,#3)}
\newcommand{\Discharge}{{\textsc{Discharge}}}
\newcommand{\SparseQuasiGossip}{{\textsc{SparseQGossip}}}
\newcommand{\DeltaRegularQuasiGossip}{{\textsc{DeltaRegularQGossip}}}
\newcommand{\SimpleSparseQuasiGossip}{{\textsc{SimpleSparseQGossip}}}
\newcommand{\Condense}{{\textsc{Condense}}}

\newtheorem{theorem}{Theorem}
\newtheorem{lemma}{Lemma}

\newtheorem{claim}[lemma]{Claim}

\newtheorem{observation}[lemma]{Observation}

\newcommand{\ignore}[1]{}

\newcommand{\margincomment}[2]%
{\marginpar{\footnotesize\raggedright {\color{red}#1}: #2}}

\newcommand{\etal}{et~al.}

\newcommand{\myparagraph}[1]{{\medskip\noindent\textbf{#1}}}
\newcommand{\myitparagraph}[1]{{\medskip\noindent\textit{#1}}}

\algnewcommand\algorithmiclet{\textbf{let}}
\algnewcommand\Let{\State \algorithmiclet\ }
\algnewcommand\algorithmiccase{\textbf{case}}
\algdef{SE}[CASE]{Case}{EndCase}[1]{\algorithmiccase\ #1}{\algorithmicend\ \algorithmiccase}%
\algtext*{EndCase}%
\algnewcommand{\IfThenElse}[3]{
  \State
  \algorithmicif\ #1\ \algorithmicthen\ #2\ \algorithmicelse\ #3}

\algnewcommand{\LeftComment}[1]{\Statex \hspace{-1em} \(\triangleright\) \emph{#1}}

\newcommand{\braced}[1]{{ \left\{ {#1} \right\} }}

\newcommand{\suchthat}{{\;:\;}}

\newcommand{\polylog}{\operatorname{polylog}}

\newcommand{\tinyminus}{{\scriptscriptstyle{-}}}
\newcommand{\tinyplus}{{\scriptscriptstyle{+}}}

\newcommand{\textmax}{{\textrm{max}}}

\newcommand{\inneighbors}{\textit{NB}^\tinyminus}
\newcommand{\outneighbors}{\textit{NB}^\tinyplus}
\newcommand{\indegree}{{\textrm{deg}^\tinyminus}}
\newcommand{\outdegree}{{\textrm{deg}^\tinyplus}}
\newcommand{\rumor}{\rho}
\newcommand{\packet}{\pi}

\newcommand{\aveindegree}{{\Delta_{\scriptstyle\text{\normalfont ave}}}}
\newcommand{\eligible}{\calE}

\newcommand{\pendingrumors}[2]{\textit{PR}_{#1}^{#2}}
\newcommand{\pendingpackets}[2]{\textit{PP}_{#1}^{#2}}

\newcommand{\SelLatency}[3]{L_{\scriptscriptstyle\text{\normalfont Sel}}(#1,#2,#3)}
\newcommand{\LayerSelLatency}{\Lambda}
\newcommand{\TSLatency}[3]{L_{\scriptscriptstyle\text{\normalfont{TrS}}}(#1,#2,#3)}
\newcommand{\LayerTSLatency}{\Gamma}

\newcommand{\flushtime}{{L_{\scriptscriptstyle\text{\normalfont Flush}}}}
\newcommand{\dischargetime}{{L_{\scriptscriptstyle\text{\normalfont Dis}}}}
\newcommand{\packetset}{\Pi}

\newcommand{\onethird}{{\frac{1}{3}}}
\newcommand{\txtonethird}{\textstyle{\frac{1}{3}}}

\newcommand{\txttwothirds}{\textstyle{\frac{2}{3}}}
\newcommand{\sltwothirds}{{\sfrac{2}{3}}}

\newcommand{\half}{{\frac{1}{2}}}
\newcommand{\txthalf}{\textstyle{\frac{1}{2}}}
\newcommand{\slhalf}{{\sfrac{1}{2}}}

\newcommand{\slthreehalves}{{\sfrac{3}{2}}}

\newcommand{\slfourthirds}{{\sfrac{4}{3}}}

\newcommand{\slthreefourths}{{\sfrac{3}{4}}}

\newcommand{\slfivefourths}{{\sfrac{5}{4}}}

\newcommand{\onefifth}{{\frac{1}{5}}}

\newcommand{\slthreefifths}{{\sfrac{3}{5}}}

\newcommand{\slelevensixths}{{\sfrac{11}{6}}}

\newcommand{\elevennineths}{{\frac{11}{9}}}

\newcommand{\slelevennineths}{{\sfrac{11}{9}}}

\newcommand{\ListLengths}{\setlength{\itemsep}{0ex}\setlength{\topsep}{1ex}\setlength{\partopsep}{0ex}}

\graphicspath{{./Figures/}}
\title{A Gossiping Protocol for Sparse Ad-Hoc Radio Networks\thanks{Research supported by NSF grant CCF-2153723.}}

\author[1]{Chao Wu}
\author[1]{Marek Chrobak}

\affil[1]{University of California at Riverside}

\begin{document}

\maketitle

\begin{abstract}
We study the problem of gossiping (all-to-all information exchange) in ad-hoc radio networks. Such a network is
represented by a strongly-connected directed graph with $n$ vertices, whose topology is initially unknown to the protocol. 
In 2004, Gasieniec, Radzik, and Xin~\cite{Gasieniec_etal_det_gossip_04} gave an $\tildeO(n^\slfourthirds)$-time
deterministic protocol for this problem, and closing the gap between their upper bound and the
$\tildeOmega(n)$ lower bound on the time complexity of gossiping remains a central open problem. 
We develop a deterministic protocol for gossiping in ad-hoc radio networks that achieves running time
$\tildeO((mn)^{\slthreefifths})$ for directed graphs with at most $m$ edges. 
Our protocol improves on the $\tildeO(n^\slfourthirds)$ bound when $m = O(n^{c})$,
for $c < \elevennineths$. We also present an $\tildeO(\Delta^{\slhalf}n)$-time gossiping
protocol for $\Delta$-regular graphs.
\end{abstract}


\section{Introduction}
\label{sec: introduction}


Broadcasting and gossiping are fundamental primitives for information dissemination in networks.
In broadcasting, or one-to-all communication, a single node has a message 
that needs to be delivered to all nodes in the network.
In gossiping, or all-to-all communication, every node has its own unique
message, which we refer to as a \emph{rumor}, and the objective is to deliver each rumor to all nodes.

In this paper we study deterministic protocols for gossiping in \emph{ad-hoc radio networks}. 
We assume a basic model of such networks, as formalized in~\cite{Chlebus_etal_broadcast_02}, where
a network is modeled by a strongly-connected directed graph with $n$ vertices that
represent radio transmitters/receivers. Each vertex is identified by a unique label from the range
$0,...,N-1$, where $N = O(n)$.
These vertices communicate via transmissions over a radio channel
and the outgoing edges of a vertex represent its transmission range.
The term ``ad-hoc'' refers to the assumption that the topology of this graph is not known
when the execution of the protocol starts. There is no restriction on the size or
content of messages transmitted during the protocol. As all vertices use the same radio channel,
when two vertices transmit simultaneously, signal collisions will occur at their shared
out-neighbors. The model does not assume any collision detection capability, so neither
the senders nor their shared out-neighbors will be notified when collisions occur.

Both broadcasting and gossiping in this model can be quite trivially solved in time $O(n^2)$ by
having only one vertex transmit at each time step, cyclically in a Round-Robin fashion,
thus avoiding collisions altogether. 
For broadcasting, this bound was improved in 2002 by Chlebus~\etal~\cite{Chlebus_etal_broadcast_02} to
$O(n^{\slelevensixths})$, and soon thereafter Chrobak~\etal~\cite{Chrobak_etal_fast_02} gave
a nearly-optimal protocol with running time $O(n\log^2n)$.
Further improvements in the poly-logarithmic factors were later achieved
by DeMarco~\cite{DeMarco_distributed_10} and by Czumaj and Davies~\cite{Czumaj_Davies_faster_16}, 
who provided upper bounds that are only a loglog-factor away from the lower bound of
$\Omega(n\log n)$~\cite{Bruschi_etal_lower_bound_97,Chlebus_etal_broadcast_02}.

In contrast, the complexity status of gossiping is still far from settled.
The first sub-quadratic protocol, with running time $\tildeO(n^{\slthreehalves})$\footnote{%
We express most of the time bounds for gossiping using the $\tildeO()$ and $\tildeOmega()$ notations,
the asymptotic notations that hide
$\polylog(n)$ factors. With the large gap between lower and upper bounds, logarithmic
factors are of secondary importance. 
},
was given by Chrobak~\etal~\cite{Chrobak_etal_fast_02}. Xu~\cite{Xu_det_gossip_03}
reduced the poly-logarithmic factor in their upper bound.
A gossiping protocol that significantly improved this running time, to $\tildeO(n^{\slfourthirds})$,
was developed by Gasieniec~\etal~\cite{Gasieniec_etal_det_gossip_04} in 2004.
Their protocol is still the current state-of-the-art, except
for a recent minor improvement of the polylogarithmic factor by Kuschner~\etal~\cite{kuschner2024permutation}.
Closing or reducing the remaining gap between the upper and lower bounds of
$\tildeO(n^{\slfourthirds})$ and $\tildeOmega(n)$, respectively, remains a major open problem in this area.

Some of these bounds have been refined by expressing them in terms of additional network attributes.
Two natural attributes are the graph's diameter, $D$, and its maximum vertex in-degree, $\Delta$. 
Examples of such results include the $\tildeO(D\Delta^2)$-time protocol by Clementi~\etal~\cite{Clementi_etal_01},
later improved to $\tildeO(D\Delta^{\slthreehalves})$ by Gasieniec and Lingas~\cite{gasieniec2002adaptive},
or the $O(\Delta n)$-time protocol by Gasieniec~\etal~\cite{Gasieniec_etal_det_gossip_04}.


\myparagraph{Our contributions.}
Following this line of research, we study gossiping in \emph{sparse} radio networks,
where the number of edges $m$ is sub-quadratic. The main result of this paper is
the following theorem.

\begin{theorem}\label{thm: improved bound sparse graphs}
There is a deterministic protocol for ad-hoc radio networks
that completes gossiping in time $\tildeO((mn)^{\slthreefifths})$ for directed graphs with
at most $n$ vertices and at most $m$.
\end{theorem}

This protocol is faster than the $\tildeO(n^{\slfourthirds})$-time protocol in~\cite{Gasieniec_etal_det_gossip_04}
as long as  $m = O(n^{c})$, for $c < \slelevennineths$.
To our knowledge, this is the first improvement over the $\tildeO(n^{\slfourthirds})$ upper bound in the general model of
ad-hoc radio networks that does not involve a uniform bound on vertex in-degrees.
The proof of Theorem~\ref{thm: improved bound sparse graphs} is given in
Section~\ref{sec: faster protocol for sparse graphs}. A key ingredient in achieving
this running time is a protocol (see Section~\ref{sec: auxiliary mac protocol}) for a variant of the
multiple-access channel problem in which packets to be
transmitted on the channel may have multiple copies in different competing processors. This protocol involves a novel
use of selectors and may be of its own independent interest.

\smallskip

We also consider the case of \emph{$\Delta$-regular graphs}, 
in which every vertex has both its in-degree and out-degree equal to $\Delta$. 
As mentioned earlier, if only in-degrees are upper bounded by $\Delta$, gossiping can be
solved in time $O(\Delta n)$~\cite{Gasieniec_etal_det_gossip_04}.
In Section~\ref{sec: delta-regular graphs} we improve it if the graph is $\Delta$-regular,
by proving the following theorem:

\begin{theorem}\label{thm: Delta-regular graphs}
There is a deterministic protocol for ad-hoc radio networks
that completes gossiping in time $\tildeO(\Delta^{\slhalf} n)$ for $\Delta$-regular graphs with at most $n$ vertices.
\end{theorem}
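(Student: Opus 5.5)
The plan is a two-phase protocol with threshold $\sqrt{\Delta}$, reusing standard tools: strongly selective families (selectors), and the $O(\Delta n)$-type protocol of Gasieniec~\etal{} in which each vertex, in every round, transmits its entire current set of known rumors. Since message size is unlimited, the real task is to move every rumor across every edge fast enough. In a $\Delta$-regular graph both in- and out-degrees equal $\Delta$, so $m=\Delta n$. The plan is to spend $\tildeO(\Delta^{\slhalf} n)$ time in each of two phases, each handling one of two regimes for rumor dissemination.

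\textbf{Phase 1 (local aggregation).} First I would run a $(N,\Delta)$-strongly-selective family, of length $\tildeO(\Delta^2)$, once. After this, every vertex has heard from each of its in-neighbors and therefore knows its in-neighborhood. That cost is fine only if $\Delta^2\le \Delta^{\slhalf}n$, i.e.\ $\Delta\le n^{\sltwothirds}$. For larger $\Delta$ I would instead use $(N,k)$-selectors with $k=\sqrt{\Delta}$, iterated $\tildeO(n/\sqrt{\Delta}\cdot\ldots)$ times, with each vertex transmitting everything it knows. Every vertex repeatedly participates in these selectors; a vertex drops out of active transmission once its rumor set stops growing, to be formalized via the usual ``pending rumors'' bookkeeping.

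\textbf{Phase 2 (propagation along a sparse backbone).} The key structural fact I would prove concerns a regular graph: any set $S$ of vertices has at least $|S|$ edges leaving $N^+(S)$ or $S$ has small expansion. More usefully, the out-neighborhood of any vertex has size exactly $\Delta$. So I would group the vertices into about $n/\Delta$ clusters of diameter at most $2$ via a greedy packing of out-neighborhoods. I would then simulate the round-robin broadcasting of $\tildeO(n)$ per ``super-step'' over $O(n/\Delta)$ clusters. Each super-step delivers the cluster's aggregated rumors to all neighbors, using selectors of size $\sqrt{\Delta}$: $\tildeO(\sqrt{\Delta})$ rounds per selector times $n/\Delta\cdot\Delta$ steps yields $\tildeO(\Delta^{\slhalf}n)$.

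The main obstacle is making Phase~2 rigorous without knowing the topology. Clusters must be formed distributedly, and a vertex in-neighborhood may see up to $\Delta$ simultaneous transmitters. A $\sqrt{\Delta}$-selector only isolates a transmitter when at most $\sqrt{\Delta}$ are active. To handle this, I would use a potential argument. Either few active in-neighbors remain at a vertex, in which case the selector succeeds, or many vertices are active, in which case each has out-degree $\Delta$ and they collectively cover $\ge\Delta\cdot(\#\text{active})/\Delta$ vertices. Regularity, specifically the equality of in- and out-degrees, is exactly what allows this double count to conclude that the number of ``blocked'' vertex-steps is bounded by $\tildeO(n\sqrt{\Delta})$. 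I expect to spend most of the effort on this amortization, and would first try charging each failed selector round to the $\ge\sqrt{\Delta}$ transmitters that caused it, each of which can be charged at most $\tildeO(n/\sqrt{\Delta}\cdot\ldots)$ times before its rumor set is complete.
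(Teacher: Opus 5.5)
There is a genuine gap. The proposal never fixes a protocol whose time and correctness can be checked, and it lacks the one structural fact that makes the bound work. Your Phase~1 moves each rumor only one hop. Phase~2 relies on clusters that cannot be built without knowing the topology, and the amortization meant to replace them is left with ``$\ldots$'' exactly in the bounds that matter.

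The accounting also fails in principle. Even if the $n/\Delta$ clusters were handed to you, paying $\tilde{O}(n)$ to broadcast each cluster's aggregate already costs $\tilde{O}(n^2/\Delta)$. That exceeds $\tilde{O}(\Delta^{1/2}n)$ precisely when $\Delta<n^{2/3}$, which is the only regime where the theorem is new. For $\Delta\ge n^{2/3}$ you need nothing: the known $\tilde{O}(n^{4/3})$ protocol is already within $\tilde{O}(\Delta^{1/2}n)$, so your large-$\Delta$ $\sqrt{\Delta}$-selector scheme is unnecessary. Groups of size $\Delta$ are simply too small. What you need is for every rumor to end up held by $\Omega(n/\sqrt{\Delta})$ vertices, so that $\tilde{O}(\sqrt{\Delta})$ flushes suffice.

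The missing ingredient is multi-hop expansion in $\Delta$-regular digraphs: the set of vertices within distance $d$ of $v$ has size at least $\min(n,\frac13(d+1)\Delta)$. That a single out-neighborhood has size $\Delta$ says nothing about growth over many hops. Regularity enters through flow conservation, as you sensed, but it should be aimed at BFS layers $L_0,L_1,\ldots$ from $v$, not at blocked transmissions.

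Suppose $|L_j\cup L_{j+1}|<\frac23\Delta$. Then every vertex of $L_j$ has at least $\frac13\Delta$ out-edges into $S=L_0\cup\cdots\cup L_{j-1}$. Since in- and out-degrees agree, at least as many edges leave $S$, and all of them go from $L_{j-1}$ to $L_j$. So some vertex of $L_j$ has at least $\frac13\Delta$ in-neighbors in $L_{j-1}$, forcing $|L_{j-1}|\ge\frac13\Delta$. Grouping consecutive layers then gives the bound.

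With this lemma the protocol is plain spread-and-flush. First run the $(n,\Delta,\Delta)$-selector $\beta=n/\Delta^{3/2}$ times, at cost $\beta\cdot\tilde{O}(\Delta^2)=\tilde{O}(\Delta^{1/2}n)$. Each rumor then traverses every path of length at most $\beta$, and hence sits at no fewer than $\frac13 n/\sqrt{\Delta}$ vertices. Then repeatedly flush a maximum-load vertex. This is greedy set cover in which every rumor lies in at least $\frac13 n/\sqrt{\Delta}$ of the $n$ pending sets. So $\tilde{O}(\sqrt{\Delta})$ flushes of $\tilde{O}(n)$ time each complete gossiping within $\tilde{O}(\Delta^{1/2}n)$.
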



\myparagraph{Other related work.}
As in other settings involving uncertainty, the running time of protocols in ad-hoc radio
networks can be improved, in expectation, by using randomization. In case of broadcasting,
the randomized case is completely solved, with the optimal running time known to be $O(D\log(n/D)+\log^2n)$~\cite{alon_etal_lower_bound_91,Kushilevitz_Mansour_lower_bound_98,Czumaj_Rytter_broadcast_06}. 
For gossiping, randomized protocols in~\cite{Czumaj_Rytter_broadcast_06,Liu_Prabhakaran_randomized_02,Chrobak_etal_randomized_04}
are only within a small poly-logarithmic factor from the 
lower bound~\cite{Kushilevitz_Mansour_lower_bound_98,Liu_Prabhakaran_randomized_02}.

Relatively little is known about time complexity of other communication primitives
in ad-hoc radio networks, for example information gathering%
\footnote{Information gathering is not
a special case of gossiping, because it does not assume strong connectivity.}
(all-to-one communication)~\cite{Chrobak_etal_information_gathering_2021}
or multi-broadcasting (many-to-all communication)~\cite{Clementi_dist_bcast_03}.

There are multiple other variants of the radio network model in the literature.
For instance, one important model generalizes the one from~\cite{Chlebus_etal_broadcast_02} by allowing
a polynomial-size label range (that is, $N = O(n^c)$, for a constant $c\ge 1$). In particular,
the $\tildeO(n^{\slfourthirds})$-time gossiping protocol in~\cite{Gasieniec_etal_det_gossip_04}
applies to this model.

The case of undirected graphs has been relatively well studied.
(See~\cite{Gasieniec_etal_large_labels_2007}, for example.)
Protocols in this model are generally faster, partly because bidirectional communication
facilitates direct feedback, thus helping to decrease the number of retransmissions and in turn reducing congestion.
Other variants in the literature include
models with restrictions on message size~\cite{Christersson_etal_gossiping_bounded_2002},
some forms of collision detection~\cite{Chrobak_etal_information_gathering_2021},
different assumptions about the range of vertex labels~\cite{Gasieniec_etal_large_labels_2007}, or
knowledge about graph topology~\cite{gkasieniec2007time}.

Interested readers can find more information about radio network protocols in the survey
papers by Gasieniec~\cite{Gasieniec_deterministic_broadcasting_2016},
Peleg~\cite{Peleg_broadcasting_review_2007},
and Itai~\cite{Itai_randomized_broadcasting_encyclopedia_2016}. 



\section{Preliminaries}
\label{sec: preliminaries}


\myparagraph{Ad-hoc radio network model.}
We use a basic model of ad-hoc radio networks from~\cite{Chlebus_etal_broadcast_02}. In this model,
a radio network is a strongly-connected directed graph $G = (V,E)$, whose vertices represent 
transmitters/receivers
and edges represent their ranges: $(u,v)\in E$ iff $v$ is within the range of the radio signal from $u$.
The number of vertices is $n = |V|$ and the number of edges is $m = |E|$. For $v\in V$,
by $\inneighbors(v)$ and $\outneighbors(v)$ we denote the sets of in-neighbors and out-neighbors of $v$.
The in-degree and out-degree of $v$ are defined in the standard way:
$\indegree(v) = |\inneighbors(v)|$ and $\outdegree(v) = |\outneighbors(v)|$.
The formal model involves the following assumptions:
\begin{itemize}[nosep,leftmargin=*]
\item The vertices are assigned unique labels from the range $[N] = \braced{0,1,...,N-1}$, where $N = O(n)$.
\item The graph topology is not revealed when a protocol starts. We only assume that the 
protocol knows $N$ and an upper bound $m$ on the number of edges.  

\item Initially, each vertex $v\in V$ possesses a unique message $\rumor_v$ that we refer to as
its \emph{rumor}. 
\item The time is discrete, with time steps numbered $0,1,2,...$. 
At each step, a vertex can either listen to transmissions from other vertices
or it can transmit one message.

\item
There is no restriction on the size or content of messages. 
\item  When a vertex $u$ transmits in some time step, 
its message is sent to each of its out-neighbors $v \in \outneighbors(u)$. 
If $u$ is the only in-neighbor of $v$ transmitting at this step then $v$
will receive $u$'s message. 
But if two or more in-neighbors of $v$
transmit at the same time step, a \emph{collision} occurs at $v$. Then 
$v$ does not receive any message. The model does not assume any
collision detection capability, so the senders do not know that their messages
were not received by $v$, and $v$ does not know about their transmission attempts.

\end{itemize}

The goal of gossiping is to \emph{disseminate} each rumor,
where a rumor is considered disseminated if it's delivered to all vertices in the network. 
A gossiping protocol is defined by specifying the actions of each vertex $v\in V$
at each time step, based on $v$'s label and on 
the record of all communications received by $v$ before the current step.
The running time of a protocol is the number of steps until the gossiping task is completed.

To simplify presentation, we assume that $N = n$ (this does not affect our protocols or analysis).
We also allow vertices to simultaneously receive and transmit. By standard arguments,
a protocol can be modified to perform only one of these actions at each step,
without affecting its asymptotic running time.
Since there is no restriction on the message size, when a node transmits, it can as well transmit
the complete history of its communications so far. In particular,
a message may contain multiple rumors.
We can also assume that each vertex knows
its in-degree, as this can be accomplished with a single round of $\RoundRobin$ (see below).


\myparagraph{Selectors.}
Selectors are combinatorial structures whose uses include congestion control in multiple-access channels. 
Formal definitions of selectors in the literature
vary (e.g.~\cite{Erdos_etal_families_85,komlos_greenberg_non-adaptive_mac_1985,DeBonis_etal_selectors_groups_2003}), 
but all are fundamentally equivalent to the definition below.

Given two sets $Q, X\subseteq [n]$ and $z\in X$, we say that $Q$ \emph{isolates $z$ from $X$}
if $Q\cap X = \braced{z}$. This extends naturally to a family of sets:
A family $\calQ$ of sets is said to \emph{isolate $z$ from $X$} if at least one set in $\calQ$ does.

Let $1\le k\le n$, and
let  $\barS = (S_0, S_1,..., S_{\ell-1})$ be a sequence of sets, where $S_i\subseteq [n]$ for each $i$.
We will call $\barS$ an \emph{$(n,k,g)$-selector} iff
the family of sets $\braced{S_0,...,S_{\ell-1}}$ isolates at least $g$ elements from 
each $k$-element subset of $[n]$. We will refer to this $\ell$ as the \emph{latency} of $\barS$.
The construction by De Bonis~{\etal}~\cite{DeBonis_etal_selectors_groups_2003}
implies the existence of an $(n,k,g)$-selector with latency $\ell = \SelLatency{n}{k}{g} = O(k^2\log n/(k - g + 1))$. 

In our protocols we mostly use the values $g=1$ and $g=k$.
Each $(n,k,1)$-selector isolates at least one element  from each $k$-element set, and
each $(n,k,k)$-selector isolates all elements  from each $k$-element set.
These two types of selectors are often referred to in the literature as
\emph{weak} and \emph{strong} selectors, respectively.
In Section~\ref{sec: auxiliary mac protocol} we will use other values of $g$.
In that section, we will use the following property of $(n,k,g)$-selectors, that follows
directly from their definition:
for each set $X$ with $|X|=k$ and each non-empty subset $Z\subseteq X$ with $|Z| \ge k-g+1$,
a $(n,k,g)$-selector isolates from $X$ at least one element of $Z$.
(Note that, crucially, in this property the elements of $Z$ are isolated from a superset $X$ of $Z$, not from $Z$ itself.)


\myparagraph{Basic protocols.}
A simple protocol for gossiping called $\RoundRobin$ has
vertices labelled $0,1,...,n-1$ transmit one by one. There are no collisions, so each
vertex will successfully transmit, and thus
after $n-1$ rounds of $\RoundRobin$ each rumor will be disseminated, completing gossiping in $O(n^2)$ time.

Faster gossiping protocols in the literature use selectors. When we say that a protocol
\emph{executes or runs an $(n,k,g)$-selector $\barS$} we mean that at each time $t$ it uses $S_t$ to determine 
which vertices transmit: a vertex $u$ transmits at time $t$ if and only if $u\in S_t$.
(Here, we identify $u$ by its label.)
To see how this can speed-up gossiping, consider a vertex $v$ with $\indegree(v)\le k$.
With an $(n,k,k)$-selector
it will take time $\SelLatency{n}{k}{k} = \tildeO(k^2)$ for $v$ to receive messages from all its in-neighbors,
which, for $k = o(\sqrt{n})$, is faster than $\RoundRobin$'s time $n$.


\myparagraph{Flushing.}
In our protocols, at each step of the execution all rumors will be designated as either \emph{pending} or \emph{flushed}. 
Initially all rumors are pending.
Each vertex $v$ will maintain its own list $\pendingrumors{v}{}$ of pending rumors, where initially
$\pendingrumors{v}{} = \braced{\rumor_v}$.  As $v$ receives messages with new rumors, they
are added to $\pendingrumors{v}{}$. Rumors lose their pending status, and are removed from each set $\pendingrumors{v}{}$,
after they are ``flushed'' from the network as a result of completing a procedure $\Flush()$, described below.

Procedure $\Flush(u)$, for a vertex $u$, is simply an $\tildeO(n)$-time broadcasting
protocol that delivers the message from $u$, containing all rumors in $\pendingrumors{u}{}$,
to all vertices in the network, thus disseminating these rumors. As soon as $\Flush(u)$ completes,
the status of rumors in $\pendingrumors{u}{}$ changes from pending to flushed, and at this time
each vertex $v$ sets $\pendingrumors{v}{} = \pendingrumors{v}{} \setminus\pendingrumors{u}{}$.  We remark that
it is possible for $\rumor_v$ to be disseminated even if it's not flushed, in which case $\rho_v$'s status remains pending.

Gossiping protocols in the literature are often based on the \emph{spread-and-flush} paradigm, 
that involves spreading rumors across the network, and flushing rumors from vertices that collected many pending rumors.
In our protocols, the conditions for flushing will be more subtle.
A vertex $v$ will be called \emph{flush-eligibility} if it satisfies a predicate $\eligible(v)$ that is a parameter of the protocol.
Protocol~$\Discharge(\eligible)$ shown below combines the process of finding a flush-eligible node and
the process of flushing its pending rumors. In this protocol, $\flushtime$ denotes the running time of procedure $\Flush()$.

\begin{center}
\framebox{
\begin{minipage}{3in}
\begin{tabbing}
aa \= aa \= aa \= aa \= aa \= aa \= \kill
\textbf{Protocol} $\Discharge(\eligible)$ \\
 \> \textbf{if} $\exists v: \eligible(v)$ \textbf{then} $\Flush(v)$ \\
 \> \> \textbf{else} do nothing for time $\flushtime$
\end{tabbing}
\end{minipage}
}
\end{center}

One simple example of a flush-eligibility predicate is $\eligible(v) \equiv (|\pendingrumors{v}{}|\ge \beta)$,
for some suitable threshold $\beta$, namely we flush from vertices with at least $\beta$ pending rumors.
As shown in~\cite{Chrobak_etal_fast_02}, finding a $v$ with maximum load $|\pendingrumors{v}{}|$ can be done in  
time $\tildeO(n)$, so such a flush-eligible vertex can be found in time $\tildeO(n)$. This running time remains the same
as long as the flush-eligibility predicate uses only local information from a vertex, which is the case
in our protocols.  Thus, denoting by $\dischargetime$ the running time of $\Discharge()$,
throughout the paper we assume that $\dischargetime = \tildeO(n)$.


\myparagraph{Quasi-gossiping.}
We use the concept of \emph{quasi-gossiping} from~\cite{Gasieniec_etal_det_gossip_04}.
In quasi-gossiping we relax the objective of gossiping as follows:
instead of requiring that each rumor gets disseminated, we only require that it gets
\emph{quasi-disseminated}, which means that it
will be either disseminated or delivered to a vertex whose rumor was disseminated.
Any quasi-gossiping protocol $\protP$ can be converted into a gossiping protocol with
the same asymptotic running time as follows:
first run $\protP$, then repeat blindly the sequence of transmissions from that run of $\protP$.
(Caution: this is not the same as running $\protP$ again.)
In this protocol, if a rumor $\rumor_u$ was not disseminated by $\protP$, it must have
been delivered to some vertex $v$ whose $\rumor_v$ was disseminated by $\protP$.
Since in the second part of the protocol we execute exactly the same transmissions,
these transmission will result in disseminating $\rumor_u$ from $v$.


\myparagraph{Multiple transmission channels.}
It is convenient to present protocols as using multiple radio channels,
and allowing them to listen and transmit to all channels in a single step.
Collisions involve only transmissions on the same channel.
A protocol with $\mu$ channels can be simulated on one channel by
dividing the time into blocks of length $\mu$, and simulating one step
with $\mu$ channels by executing the same actions, in any order, in one block
(and ignoring all new information until the block ends).
This increases the running time by a factor of $\mu$, so, as long as 
$\mu = O(\polylog(n))$, an $\tildeO$-estimate on the running time is not affected.


\myparagraph{Rounding convention.}
To avoid clutter, in our formulas for various integer parameters and running times we will
omit rounding. All these values can be rounded up or down, as needed, without affecting the
asymptotic analysis. Some quantities, in particular 
running times of various sub-protocols used in the paper, will be further rounded to powers of two.


\section{Warmup: A Simpler Protocol}
\label{sec: simpler protocol}


To illustrate some key ideas in our protocol, we first show a simpler protocol
with running time $\tildeO(n^{\slthreefourths} m^{\slhalf})$.
As discussed in Section~\ref{sec: preliminaries}, it is sufficient to only show how to accomplish quasi-gossiping.
We can also assume that a protocol can use $O(\log n)$ radio channels and that each vertex knows its in-degree.

The protocol divides all vertices of degree at most $\sqrt{n}$ into at most $\half\log n$ layers, 
where each layer consists of vertices whose in-degrees are between two consecutive powers of $2$. 
Each group will have a 
corresponding selector to process transmissions into vertices in this group. We will present the protocol as
using different channels for different selectors. 
Vertices with degree larger than $\sqrt{n}$ form another group, that will be handled using $\RoundRobin$.
Each group has a different threshold value for flush eligibility, with higher degrees assigned lower threshold values.
The key idea is this: Most vertices have in-degrees close to the average in-degree $\aveindegree = m/n$.
Vertices whose in-degree is much larger than $\aveindegree$ 
need a selector with large latency, slowing down the progress of
gossiping, but there are also fewer of them, so their rumors require fewer flush operations.

Let $l = \log \aveindegree$ and $h = \half\log n$. (Recall our rounding convention.)
We partition the set of vertices into $h-l+1$ layers, depending on their in-degree:
\begin{align*}
	V_l \;&=\; \textstyle \braced{v\in V\suchthat 1 \le \indegree(v) \le 2^l}
	\\
	V_j \;&=\;\textstyle \braced{v\in V\suchthat 2^{j-1} < \indegree(v)  \le 2^{j}}\quad\textrm{for}\; j = l+1,...,h-1
	\\
	V_h \;&=\; \textstyle\braced{v\in V\suchthat 2^{h-1} < \indegree(v)  < n}
\end{align*}
From Markov's inequality, we obtain
\begin{observation}\label{obs: bound on Vj simple}
$|V_j|\le n/2^{j-1-l}$ for $j=l,...,h$.
\end{observation}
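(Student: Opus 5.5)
The bound follows from Markov's inequality applied to the in-degree distribution, with a separate trivial check for the bottom layer $V_l$.

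First, the total in-degree satisfies $\sum_{v\in V}\indegree(v) = m$, since every edge contributes exactly one to the in-degree of its head. By definition, $\aveindegree = m/n$, and the rounding convention gives $2^l = \aveindegree$.

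Case $j=l$. The claimed bound is $n/2^{-1} = 2n$. This holds trivially, because $V_l\subseteq V$ and so $|V_l|\le n$.

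Case $j=l+1,\dots,h$. Every $v\in V_j$ satisfies $\indegree(v) > 2^{j-1}$, and in-degrees are nonnegative. Summing only over $V_j$ therefore gives
\[
|V_j|\cdot 2^{j-1} \;<\; \sum_{v\in V_j}\indegree(v) \;\le\; \sum_{v\in V}\indegree(v) \;=\; m \;=\; n\,2^l .
\]
Dividing by $2^{j-1}$ yields $|V_j| < n\,2^{l}/2^{j-1} = n/2^{j-1-l}$, which is the claim. The same argument covers $V_h$, since its only difference is an additional upper bound on in-degree, which is not used.

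The only point needing any care is the rounding of $\aveindegree$ to the power of two $2^l$. If $l$ is taken as $\lceil\log\aveindegree\rceil$, then $2^l\ge m/n$ and the inequality $m\le n2^l$ used above is exact. If $l$ is rounded down instead, the bound holds up to a factor of $2$, which is harmless for the asymptotic analysis. There is no real obstacle here.
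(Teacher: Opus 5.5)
Your proof is correct and matches the paper's justification, which is simply ``from Markov's inequality'': you use $\sum_{v\in V_j}\mathrm{deg}^-(v)\le m=n2^l$ together with $\mathrm{deg}^-(v)>2^{j-1}$ on $V_j$, and the case $j=l$ is trivial because the bound there is $2n$. The first inequality in your display should be non-strict so that it also covers $V_j=\emptyset$; this is cosmetic and the conclusion is unaffected.
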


In our protocol each vertex $v$ will maintain its list $\pendingrumors{v}{}$ of pending rumors. 
We will partition this list into subsets, according to
degrees: $\pendingrumors{v}{j} = \pendingrumors{v}{}\cap \rho(V_j)$, for all $j = l,...,h$.
(Naturally, for a set $U\subseteq V$, $\rumor(U) = \braced{\rumor_v\suchthat v\in U}$ denotes
the set of rumors that originate from $U$.)
For $j=l,...,h-1$,	let $\LayerSelLatency_{j} = \SelLatency{n}{2^j}{2^j} = O(2^{2j}\log n)$ denote
the latency of the $(n,2^j,2^j)$-selector.
For convenience, for the special case $j=h$ we will let $\LayerSelLatency_{h} = n$,
namely the latency of $\RoundRobin$. Also, for all $j$ we let $\beta_j = n / 2^{(3j-l)/2}$.

In our protocol, a vertex $v$ is considered \emph{flush-eligible} if it satisfies the
following flush-eligibility predicate:
\begin{equation*}
\eligible_0(v) \;\equiv\; 
	(\,\exists j \in \braced{l,...,h} \suchthat |\pendingrumors{v}{j}|\ge \beta_j\,)
\end{equation*}

\begin{center}
	\framebox{
		\begin{minipage}{5in}
			\begin{tabbing}
				aa \= aa \= aa \= aa \= aa \= aa \= \kill
				\textbf{Protocol} {\SimpleSparseQuasiGossip} \\
				\> Let $A =  \sum_{j=l}^h \beta_j\LayerSelLatency_{j}$ \,,\,  $F = \sum_{j=l}^h |V_j|/\beta_j$\,,\, and \, $T = A + \dischargetime F$\\
				\> \textbf{In parallel} execute the following processes for time $T$: \\
				\>\> \textit{Channel $j$, for $j=l,\ldots,h-1$:} Repeatedly run the $(n,2^j,2^j)$-selector \\
				\>\> \textit{RR~Channel:} Repeatedly run RoundRobin \\
				\>\> \textit{Flush~Channel:} Repeatedly execute $\Discharge(\eligible_0)$
			\end{tabbing}
		\end{minipage}
	}
\end{center}

Intuitively, $F$ is an upper bound on the number of flushes, so $\dischargetime F$ bounds the
time needed by all flushes, and $A$ upper bounds the time needed for a rumor to traverse
a path with at most $\beta_j$ vertices in $V_j$, for each $j$. 


\myparagraph{Running time.}
As the number of radio channels is $h-l+2 = O(\log n)$, the running time of Protocol~{\SimpleSparseQuasiGossip} is $\tildeO(T)$, 
so it is sufficient to show that $T = \tildeO(n^{\slthreefourths} m^{\slhalf})$.

For each $j< h$, we have 
$\beta_j\LayerSelLatency_{j} = n\cdot 2^{(j+l)/2} 
							\le n\cdot 2^{(h+l)/2} 
							= \tildeO(n^{\slfivefourths}\aveindegree^{\slhalf}) 
							= \tildeO(n^{\slthreefourths} m^{\slhalf})$.
For $j = h$, $\beta_h\LayerSelLatency_{h} = \tildeO(n^{\slthreefourths} m^{\slhalf})$ as well. 
Since $h\le\log n$, we obtain that $A = \tildeO(n^{\slthreefourths} m^{\slhalf})$.
Also, $|V_j|/\beta_j \le 2^{(j+l)/2} \le 2^{(h+l)/2}$ for each $j$, and $\dischargetime = \tildeO(n)$,
so $\dischargetime F = \tildeO(n^{\slthreefourths} m^{\slhalf})$ as well. Therefore $T = \tildeO(n^{\slthreefourths} m^{\slhalf})$.


\myparagraph{Correctness.}
It remains to prove correctness, namely that after time $T$ the protocol will complete quasi-gossiping.
We start with two observations.

\begin{observation}\label{obs: simple prot, number of flushes}
If the protocol executes $F$ flushes then gossiping will be completed. 
\end{observation}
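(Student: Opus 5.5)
The plan is a counting argument on how many rumors each flush removes from pending status. Every flush is performed by $\Discharge(\eligible_0)$ from a vertex $v$ satisfying $\eligible_0(v)$. So for some $j$ we have $|\pendingrumors{v}{j}|\ge\beta_j$ at the moment $\Flush(v)$ is invoked.

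When $\Flush(v)$ completes, every rumor in $\pendingrumors{v}{}$ changes from pending to flushed, and is then removed from every list $\pendingrumors{u}{}$. In particular, at least $\beta_j$ rumors from $\rumor(V_j)$ become flushed. Flushed rumors never become pending again, because status changes only from pending to flushed. Hence the rumors charged to distinct flushes are distinct. One subtlety must be checked here: $\pendingrumors{v}{}$ could change during the $\flushtime$ steps of the flush. I would fix the set at the start of $\Flush(v)$, as the description suggests, so that the charged set is well defined.

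Charge each flush to one layer $j$ witnessing eligibility. Layer $j$ contains only $|V_j|$ rumors, and each flush charged to $j$ consumes at least $\beta_j$ fresh ones. So at most $|V_j|/\beta_j$ flushes can be charged to layer $j$.

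Summing over $j=l,\dots,h$, at most $\sum_j |V_j|/\beta_j = F$ flushes can occur while eligible vertices still exist. Suppose the protocol actually executes $F$ flushes; here I read ``executes $F$ flushes'' as $F$ rounds of $\Discharge$ that each found an eligible vertex. Then every layer's budget is exhausted. Since the charges are disjoint, this forces every rumor in every $\rumor(V_j)$ to have been flushed. (I would handle the rounding of $F$ by taking a ceiling per layer; this only adds $O(\log n)$.)

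Every vertex has in-degree at least $1$ by strong connectivity (assuming $n\ge 2$), so the layers $V_l,\dots,V_h$ cover $V$. Hence all rumors have been flushed. Each flush disseminates everything in the flushing vertex's pending list, so every rumor is disseminated and gossiping is complete.

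The main obstacle is making the disjointness and counting rigorous. The tight case is exactly $F$ flushes with each layer attaining its maximum count, and equality must force every rumor of the layer to be flushed. For this I would argue by contradiction: if some rumor in $\rumor(V_j)$ remained pending, then layer $j$ received fewer than $|V_j|/\beta_j$ charges, so the total number of flushes is below $F$.
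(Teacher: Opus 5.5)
Your proposal is correct and matches the paper's argument: each flush is charged to a layer $j$ in which the flushing vertex holds at least $\beta_j$ pending rumors, so layer $j$ can absorb at most $|V_j|/\beta_j$ flushes, and summing over $j$ gives $F$. Your contradiction argument for the equality case (exactly $F$ flushes forces every rumor of every layer to be flushed), together with your check that the layers $V_l,\dots,V_h$ cover $V$, fills in steps that the paper's one-line proof leaves implicit.
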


This is quite trivial: If $\Discharge()$ executes a flush from a vertex $v$ with
$|\pendingrumors{v}{j}|\ge \beta_j$, the number of pending rumors from $V_j$ decreases by $\beta_j$.
So for this $j$ we can have at most $|V_j|/\beta_j$ flushes, and the observation follows
by adding these bounds over all $j = l,...,h$.

\begin{observation}\label{obs: simple prot, path traveled}
For any node $z$, at time $A$ either its
rumor $\rumor_z$ has been flushed, or it has traversed each path that starts at $z$ and has 
at most $\beta_j$ vertices from $V_j$ for each $j = l,...,h$.
\end{observation}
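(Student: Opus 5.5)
We argue by induction along the path, tracking the rumor $\rumor_z$ hop by hop. Fix a path $P = (z = w_0, w_1, \ldots, w_k)$ that has at most $\beta_j$ vertices from $V_j$ for each $j$. Assume $\rumor_z$ is not flushed by time $A$; otherwise there is nothing to prove. Since rumors are never removed from pending sets except by a flush, $\rumor_z$ is then pending at every vertex that holds it, throughout $[0,A]$. We will show by induction on $i$ that $\rumor_z$ reaches $w_i$ by time
\[
t_i \;=\; \sum_{i'=1}^{i} \LayerSelLatency_{j(w_{i'})},
\]
where $j(w)$ denotes the layer index of $w$. Grouping the terms by layer, $t_k \le \sum_j \beta_j \LayerSelLatency_j = A$, which is the claim. (The hops are charged to the receiving vertices $w_1,\ldots,w_k$; these are at most the path's vertices, so the layer counts are respected.)

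\textbf{Inductive step.} Suppose $\rumor_z$ is in $\pendingrumors{w_{i-1}}{}$ at time $t_{i-1}$, and let $j = j(w_i)$.

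\emph{Case $j<h$.} Then $\indegree(w_i)\le 2^j$. Channel $j$ repeatedly runs the $(n,2^j,2^j)$-selector, whose period is $\LayerSelLatency_j$. Within one full period after $t_{i-1}$, the selector isolates $w_{i-1}$ from the in-neighborhood of $w_i$: pad $\inneighbors(w_i)$ to a set of exactly $2^j$ labels and apply the strong-selector property. At that step $w_{i-1}$ is the only in-neighbor of $w_i$ transmitting on channel $j$, so the transmission is received. Because messages carry the sender's entire history, this message contains $\rumor_z$.

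\emph{Case $j=h$.} The RR channel delivers the message from $w_{i-1}$ to $w_i$ within one round, that is, within $n = \LayerSelLatency_h$ steps.

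\textbf{Main obstacle: alignment.} The selector runs cyclically from time $0$ and is not restarted when $w_{i-1}$ receives the rumor. Waiting "one full period" from an arbitrary time $t_{i-1}$ could therefore cost up to $2\LayerSelLatency_j$, which would give only $t_k \le 2A$. I would address this as follows.

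First, I would note that a full cyclic window of length $\LayerSelLatency_j$ starting at an arbitrary phase is still a cyclic shift of the selector. It contains every set $S_t$ of the selector, so the isolating step does occur within the window. Each hop therefore costs at most $\LayerSelLatency_j$.

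If that argument turned out to be insufficient, I would accept the factor-$2$ loss, which is harmless for the $\tildeO$ bound. In that case I would either define $A$ with the factor $2$ or replace "time $A$" by "time $2A$"; this changes neither $T$'s asymptotics nor the later use of the observation.

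Finally, I would check that the flushing process never invalidates the hops: a flush only changes pending status, and by assumption $\rumor_z$ remains pending throughout, so it is carried in every message along $P$.
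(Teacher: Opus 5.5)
Your proof is correct and follows the paper's argument: each hop into a vertex of $V_j$ costs at most $\LayerSelLatency_j$ steps, via the strong selector on channel $j$ or via $\RoundRobin$ for $j=h$, and summing these costs along the path gives at most $\sum_j \beta_j\LayerSelLatency_j = A$. Your cyclic-shift remark, that any $\LayerSelLatency_j$ consecutive steps of the repeated selector contain every set $S_i$, correctly settles the alignment issue that the paper leaves implicit, so the factor-$2$ fallback is not needed.
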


This is true because, by the definition of selectors and $\RoundRobin$, once $\rumor_z$ reaches a vertex $x$, 
the time for it to traverse an edge  $(x,y)$, where $y\in V_j$ is bounded by $\LayerSelLatency_j$. 
Summing these costs along the path gives us time at most $A$.

\smallskip
Now, fix some arbitrary vertex $v$.
To show correctness we need to argue that after $T$ steps $v$'s rumor $\rumor_v$
will be quasi-disseminated. 

If $\rumor_v$ was quasi-disseminated by time $A$ then we are done, so suppose it's not.
Let $w$ be a vertex closest to $v$ that has not received $\rumor_v$ by time $A$.
Then there is a vertex $u\in \inneighbors(w)$ that has received $\rumor_v$.
As $\rho_v$ was not flushed, $\rumor_v$ must have traveled some path $P$ from $v$ to $u$
as a result of applying selectors on Channels~$l,...,h-1$ or $\RoundRobin$ on Channel~$h$.
Since $w$ has not received $\rumor_v$, Observation~\ref{obs: simple prot, path traveled} implies that
there is $j$ for which path $Pw$ must have strictly more than $\beta_j$ vertices from $V_j$.
Then $P$ has at least $\beta_j$ vertices from $V_j$, and their rumors (which are not yet flushed,
because $\rumor_v$ is not quasi-disseminated)
traveled with $\rho_v$ along $P$. So at time $A$ we have $\rumor(V_j\cap P)\subseteq \pendingrumors{u}{j}$,
where $|\rumor(V_j\cap P)|\ge \beta_j$, making $u$ flush eligible.
During the remaining $\dischargetime F$ steps $A+1,...,T$, the protocol executes $F$ discharge operations.
These operations will execute flushes as long as $u$ remains flush eligible.
If they all execute flushes, Observation~\ref{obs: simple prot, number of flushes}
implies that $\rumor_v$ will be disseminated.
If not, $u$ must cease being flush eligible before time $T$, so
at least one of the rumors from $V_j\cap P$ must have been flushed before this time.
This means that $\rumor_v$ was delivered to a vertex whose rumor was disseminated, completing the proof of correctness.


\section{An Auxiliary MAC Protocol}
\label{sec: auxiliary mac protocol}


In this section we introduce a variant of the multiple-access channel problem,
called $\MultisetMAC$ (multi-set MAC), that will be used in our protocol in Section~\ref{sec: faster protocol for sparse graphs}
to speed-up the process of rumor spreading. The basic idea is to take advantage
of redundancy: if several in-neighbors of a vertex $v$ have the same information, then
only one of them needs to succeed in transmitting this information to $v$.

In an instance of $\MultisetMAC$ we are given two sets $X , \packetset \subseteq [n]$,
where $|X| = k$. The elements of $X$ represent labels of $k$ processors
and the elements of $ \packetset$ represent labels of \emph{packets}.
(From now on, we refer to processors and packets by their labels.)
Each processor $p$ may be storing a set of packets, that we denote by $\packetset_p \subseteq  \packetset$. 
The processors have access to a common transmission channel. At each step a processor $p$
may be either idle or it may transmit its packet set $ \packetset_p$ on the channel. For the
transmission to succeed, $p$ would have to be the only processor that transmits at this
time step. A protocol for $\MultisetMAC$ specifies, for each processor $p$, at what
time steps $p$ transmits. There is no feedback from the channel, so a protocol must be \emph{oblivious}: for 
each processor $p$ its transmission times depend only on $p$ and on its packet set $ \packetset_p$. 
The objective is to design an oblivious protocol that quickly transmits all packets held by the processors, that is $\bigcup_{p\in X}  \packetset_p$.

The problem can be trivially solved in time $\tildeO(k^2)$ by applying a
selector based on the processor labels, or in time $O(n)$ using $\RoundRobin$. We show here that if the number of packets to transmit is 
sufficiently small then it is possible to do better. Our protocol combines three different selectors. 
It is described in terms of two transmission channels that will be used in parallel, as discussed in Section~\ref{sec: preliminaries}.
To transmit successfully, a processor needs to succeed on at least one of the channels.
One channel will simply execute one selector.
The second channel will divide the time into blocks, and use the second selector applied
to packets to determine in which blocks to execute the third selector.


\begin{center}
\framebox{
\begin{minipage}{5.5in}
\textbf{Protocol}~$\TripleSelFull{n}{k}{r}$. 
Let $q = k/r$ (we assume that $r \leq k$). We use three selectors:
an $(n,k,k-q+1)$-selector $\barS$, an $(n,r,r)$-selector $\barR$, and an $(n,q,1)$-selector $\barQ$.

\noindent
\textbf{In-parallel} each processor $p$ executes the following transmissions:
\begin{description}[nosep]
\item{\emph{Channel~1:}} Execute $\barS$.
\item{\emph{Channel~2:}} The time is divided into \emph{blocks}, each of length $\SelLatency{n}{q}{1}$.
	When the $j$-th block starts, it is designated as \textit{active} iff $R_j\cap  \packetset_p \neq \emptyset$.
	In this block do this: if it's active, execute $\barQ$, otherwise do not transmit at all.
\end{description}
\end{minipage}
}
\end{center}


\myparagraph{Analysis.}
Let $\packetset' = \bigcup_{p\in X} \packetset_p$ be the set of packets held by all processors.
We claim that if $|\packetset'| \leq r$
then protocol $\TripleSel()$ will transmit all packets from $\packetset'$ in time $\tildeO(kr)$. 
To prove this, consider a packet $z\in  \packetset'$, 
and let $X_z$ be the set of processors $p$ with $z\in  \packetset_p$. We have $1\le |X_z|\le k$.
We consider two cases.

If $|X_z|\ge q$ then, as explained in Section~\ref{sec: preliminaries}, selector $\barS$ will isolate from $X$ at least one
element from $X_z$ in at most $\SelLatency{n}{k}{k-q+1} = \tildeO(k^2/q)$ steps, so
$z$ will get transmitted on Channel~1 in time at most $\tildeO(k^2/q) = \tildeO(kr)$.

If $|X_z| < q$, then, by the definition of $\barR$, there is $j = \tildeO(r^2)$ for which $R_j\cap \packetset' = \braced{z}$.
Then $R_j\cap \packetset_p = \braced{z}$ for all $p\in X_z$ and $R_j\cap \packetset_p = \emptyset$ for $p\in X\setminus X_z$.
Thus all processors $p\in X_z$ will have their block $j$ active on Channel~2 and no other processors will. Then, by
the definition of $\barQ$, one of the processors in $X_z$ will get isolated from $X_z$ in at least one of the $\SelLatency{n}{q}{1}$
steps of this block, and in this  step it will successfully transmit. 
Summarizing, this processor will succeed in transmitting $z$ on Channel~2 in time 
$j\cdot \SelLatency{n}{q}{1} = \tildeO(r^2q)  = \tildeO(kr)$.

Thus, denoting by $\TSLatency{n}{k}{r}$ the running time of Protocol~$\TripleSelFull{n}{k}{r}$, we obtain:

\begin{lemma}\label{lem: TMAC protocol}
If $|\bigcup_{p\in X} \packetset_p|  \leq r$ then
Protocol~$\TripleSelFull{n}{k}{r}$ transmits all packets from $\bigcup_{p\in X} \packetset_p$
in time $\TSLatency{n}{k}{r} = \tildeO(kr)$.
\end{lemma}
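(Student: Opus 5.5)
The plan is to fix an arbitrary packet $z\in\packetset' = \bigcup_{p\in X}\packetset_p$ and show that some processor holding $z$ transmits successfully, on one of the two channels, within $\tildeO(kr)$ steps. Let $X_z=\braced{p\in X\suchthat z\in\packetset_p}$, so $1\le|X_z|\le k$, and recall $q=k/r$. The running time is then the maximum of the two channel bounds, times the constant factor from simulating two channels on one, as in Section~\ref{sec: preliminaries}. I will split on whether $|X_z|\ge q$.

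\emph{Case $|X_z|\ge q$ (Channel~1).} Here I invoke the property of $(n,k,g)$-selectors stated in Section~\ref{sec: preliminaries} with $g=k-q+1$. Since $|X_z|\ge q = k-g+1$, the selector $\barS$ isolates from $X$ some element of $X_z$. The one detail to handle is that $|X|$ must equal $k$ exactly. If fewer processors are present, I would pad $X$ with dummy labels; dummies never transmit, so isolation from the padded set implies isolation from the real one. At the isolating step that processor is the only transmitter, so $z$ gets through. The latency is $\SelLatency{n}{k}{k-q+1}=O(k^2\log n/q)=\tildeO(kr)$.

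\emph{Case $|X_z|<q$ (Channel~2).} This case has two stages.
\begin{itemize}
\item \emph{Isolating the packet.} Since $|\packetset'|\le r$, apply the strong $(n,r,r)$-selector $\barR$ to $\packetset'$, padded to exactly $r$ elements if needed. Some index $j<\SelLatency{n}{r}{r}=\tildeO(r^2)$ satisfies $R_j\cap\packetset'=\braced{z}$. Hence block $j$ is active exactly for the processors in $X_z$.
\item \emph{Isolating a processor.} Inside block $j$, the weak selector $\barQ$ is run on the set $X_z$ of size less than $q$. Padding $X_z$ to size $q$ with inactive dummies, $\barQ$ isolates some element of the padded set. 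The concern is that the isolated element might be a dummy. To avoid this I would choose the padding from labels outside $X$, or better, use the weak-selector property in the form ``isolates some element of every set of size at most $q$''. The latter follows because an $(n,q,1)$-selector with $q$ replaced by the next power of two, or a union of $(n,2^i,1)$-selectors over $i\le\log q$, costs only a $\log$ factor.
\end{itemize}
The isolated processor transmits alone and delivers $z$ by the end of block $j$, at time $O(j\cdot\SelLatency{n}{q}{1})=\tildeO(r^2 q)=\tildeO(kr)$.

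The main obstacle is this sub-$q$ subtlety on Channel~2. The weak selector guarantee is stated only for sets of size exactly $q$, and padding can make it isolate a non-transmitting dummy. I would resolve it with the union-of-geometric-sizes construction above, run inside each block, which keeps the block length $\tildeO(q)$. The analogous padding issue on Channel~1 is harmless, because there the isolated element is guaranteed to lie in $X_z$.

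Combining the two cases, every $z\in\packetset'$ is transmitted within $\tildeO(kr)$ steps, which proves $\TSLatency{n}{k}{r}=\tildeO(kr)$.
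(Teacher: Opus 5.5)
Your two-case argument is the paper's own. Channel~1 uses the superset-isolation property of the $(n,k,k-q+1)$-selector when $|X_z|\ge q$. Channel~2, when $|X_z|<q$, uses the strong selector $\barR$ on packets to make block $j$ active exactly for $X_z$, and then the weak selector $\barQ$. The bounds are the same, $\tildeO(k^2/q)$ and $\tildeO(r^2q)$. The paper simply asserts that $\barQ$ isolates a processor of $X_z$ from $X_z$. You are right that this does not follow from the exact-size definition of an $(n,q,1)$-selector.

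\textbf{The gap is in the repair you commit to: none of your three fixes works.}
\begin{itemize}
\item Choosing padding labels outside $X$ changes nothing, since the isolated element can still be a silent padding label.
\item Rounding $q$ up to a power of two only covers sets of that single size.
\item Concatenating $(n,2^i,1)$-selectors for $i\le\log q$ guarantees nothing for sizes strictly between powers of two.
\end{itemize}
For a counterexample to the last fix, fix labels $a,b,c$ and take the family $\{a,b\},\{b,c\},\{a,c\}$ together with all singletons $\{x\}$, $x\notin\{a,b,c\}$. It isolates an element from every set of size $1$, $2$ or $4$, so it is simultaneously an $(n,1,1)$-, $(n,2,1)$- and $(n,4,1)$-selector. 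Yet every member meets $\{a,b,c\}$ in $0$ or $2$ elements. So with $q=4$ and $X_z=\{a,b,c\}$, no processor ever transmits alone in block $j$.

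\textbf{A correct patch reuses the superset property you already used on Channel~1.} For $i=1,\dots,\lceil\log q\rceil$, let $\barQ_i$ be an $(n,2^i,2^{i-1}+1)$-selector. Its latency is $O(2^{2i}\log n/2^{i-1})=O(2^i\log n)$. Suppose $2^{i-1}\le|X_z|\le 2^i$, and pad $X_z$ to a $2^i$-element set $Y$. Since $|X_z|\ge 2^i-(2^{i-1}+1)+1$, the selector $\barQ_i$ isolates from $Y$ some element of $X_z$, and that element is then isolated from $X_z$ itself. Running all the $\barQ_i$ in sequence inside each block gives block length $O(q\log n)$, so the bound $\tildeO(r^2q)=\tildeO(kr)$ is preserved. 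Equivalently, take $\barQ$ to be a selective family that works for all sets of size at most $q$, which is what the paper tacitly assumes. With that substitution your proof goes through.
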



\section{Faster Protocol for Sparse Graphs}
\label{sec: faster protocol for sparse graphs}
In this section, we present our main result, a gossiping protocol with running time $\tildeO((mn)^{\slthreefifths})$,
thus proving Theorem~\ref{thm: improved bound sparse graphs}. As explained in Section~\ref{sec: preliminaries}, 
it is sufficient to only present the quasi-gossiping part of our protocol, and we can allow it to use $O(\log n)$ radio channels.
As before, all parameter
values used in the protocol are implicitly rounded to integers, either up or down, whichever is more appropriate in the given context.

One idea behind our protocol was already used in the simpler protocol in Section~\ref{sec: simpler protocol},
namely that the relative frequency of nodes with degrees above a given threshold
is inversely proportional to this threshold, and this lower frequency of high-degree nodes
compensates for the large delay to traverse them. In this section,
further speed-up in high-degree nodes is obtained by applying Protocol~$\TripleSel()$ from the previous section,
instead of selectors. This needs to be done with care, because Protocol~$\TripleSel()$ is not a simple
black-box substitute for selectors. Our protocol will proceed in phases, gradually ``condensing''
the set of not-yet quasi-disseminated rumors, namely grouping them in nodes of higher and higher degrees,
until eventually, after $O(\log n)$ phases, all rumors will be quasi-disseminated. 
This grouping will be achieved by a procedure called $\Condense()$ that we will describe shortly.

\smallskip

Let $l = \log \aveindegree$ and $h = \log n$. We partition the set of vertices into $h-l+1$ layers, depending on their
in-degree:
\begin{align*}
	V_l \;&=\; \textstyle \braced{v\in V\suchthat 1 \le \indegree(v) \le 2^l}
	\\
	V_j \;&=\;\textstyle \braced{v\in V\suchthat 2^{j-1} < \indegree(v)  \le 2^{j}}\quad\textrm{for}\; j = l+1,...,h
\end{align*}
From Markov's inequality we have $|V_j|\le m/2^{j-1}$ for all $j$, which gives us: 
\begin{observation}\label{obs: bound on Vj}
$|V_{j}|\le n\aveindegree/2^{j-1}  = 2^{h+l-j+1}$ for $j=l,...,h$.
\end{observation}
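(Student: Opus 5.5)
This is a counting argument via Markov's inequality applied to the in-degree sequence. The total in-degree satisfies $\sum_{v\in V}\indegree(v) = m = n\aveindegree$, where $m$ is the number of edges and $\aveindegree = m/n$. With the rounding convention, $2^l = \aveindegree$ and $2^h = n$, so $n\aveindegree = 2^{h}\cdot 2^{l} = 2^{h+l}$. The claimed equality $n\aveindegree/2^{j-1} = 2^{h+l-j+1}$ therefore needs no further argument.

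For the inequality we split into two cases according to $j$.

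\emph{Case $j>l$.} Every $v\in V_j$ satisfies $\indegree(v) > 2^{j-1}$. Since in-degrees are nonnegative,
\begin{equation*}
|V_j|\cdot 2^{j-1} \;<\; \sum_{v\in V_j}\indegree(v) \;\le\; \sum_{v\in V}\indegree(v) \;=\; m ,
\end{equation*}
which gives $|V_j|\le m/2^{j-1} = n\aveindegree/2^{j-1}$.

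\emph{Case $j=l$.} The Markov argument does not apply because vertices in $V_l$ have no lower bound on their in-degree beyond $1$. Here we use only the trivial bound $|V_l|\le n$. The claimed bound is $2^{h+l-l+1} = 2n$, so $|V_l|\le n\le 2n$ suffices.

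The only point needing care is the rounding of $l=\log\aveindegree$ and $h=\log n$ to integers. If $l$ is rounded down, then $2^l\le\aveindegree$. The Markov bound $|V_j|\le m/2^{j-1}$ is stated in terms of $m$, so it is unaffected, and only the cosmetic identification $n\aveindegree = 2^{h+l}$ may be off by a factor of at most $2$. That is harmless for the $\tildeO$ analysis, and consistent with the paper's convention of ignoring such rounding.
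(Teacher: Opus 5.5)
Your proof is correct and follows the paper's argument: Markov's inequality on $\sum_v \indegree(v)\le m$ gives $|V_j|\le m/2^{j-1}$. Your separate treatment of $j=l$ via $|V_l|\le n\le 2n$ makes explicit a case the paper's one-line ``from Markov's inequality'' glosses over; the only slip is that the strict inequality $|V_j|\cdot 2^{j-1}<\sum_{v\in V_j}\indegree(v)$ should be $\le$ to cover $V_j=\emptyset$.
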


Let $f=\onefifth(2h+l)$. The significance of $f$ is that, roughly, for $j \le f-1$
the protocol will use selectors for transmissions into the nodes in  $V_j$, and for $j\ge f$
it will use Protocol~$\TripleSel()$. (The true interpretation is a little more subtle, as the value of $f$
depends not only on the in-degrees of the nodes to be traversed, but also
on the number of rumors originating from high-degree nodes, bounded in Observation~\ref{obs: bound on Vj}.)

For $j=l,l+1,\dots,f-1$, let
\begin{equation*}
	\alpha_j \;=\;	2^{(2h+l-3j)/2}
	\quad \textrm{and} \quad
	\LayerSelLatency_j \;=\; \SelLatency{n}{2^j}{2^j}=\tilde{O}\left(2^{2j}\right)
\end{equation*}
and for $j=f,f+1,\dots,h$ let
\begin{equation*}
	\beta_j \;=\;	2^{(2h+l-2j)/3} 
	\quad \textrm{and} \quad
	\LayerTSLatency_j \;=\; \TSLatency{n}{2^j}{\beta_j} \;=\; \tilde{O}\left(2^j\beta_j\right)
\end{equation*}
where $\TSLatency{n}{k}{r}$ is the running time of $\TripleSelFull{n}{k}{r}$, 
as defined in Section~\ref{sec: auxiliary mac protocol}.
The parameters $\alpha_j$ and $\beta_j$ will be used in the algorithm as threshold values to determine flush 
eligibility for packets originating from layer $V_j$, for $j<f$ and $j\ge f$, respectively.


\myparagraph{Procedure~$\Condense()$.}
We present the execution of Procedure~$\Condense(s)$ in terms of disseminating \emph{packets} rather than rumors. (One can think of each
packet as a set of rumors, but this is not important for the presentation of this procedure.)
The argument $s$ is an integer in the range $f \le s \le h$.
We assume that initially each vertex $v\in V$ has a \emph{packet} that we denote by $\packet_v$. 
The objective of Procedure~$\Condense(s)$ is this: 
for each $v\in V_{s}$, after the completion of the procedure
its packet $\packet_v$ is either quasi-disseminated or delivered to at least one node in $V_{s+1}\cup ... \cup V_h$.
Note that we only specify the outcome for packets from $V_{s}$. While the outcome of other packets 
is not important, these packets also play a critical role in the protocol. Roughly, they are used in triggering flush operations and
are needed for accounting purposes, allowing us to ``charge'' excessive delays of packets from $V_{s}$ to flush operations, each
whose number is limited.

For a set $X\subseteq V$, write $\packet(X)=\braced{\packet_x\suchthat x\in X}$. 
By $\pendingpackets{v}{}$ we denote the dynamic set containing still-pending (that is, not flushed)
packets received by $v\in V$, 
and for $j=l,\ldots,h$ by $\pendingpackets{v}{j}=\pendingpackets{v}{}\cap \packet(V_j)$ we denote 
a subset of $\pendingpackets{v}{}$ that contains packets originating from layer $V_j$.

In Procedure~$\Condense(s)$, a vertex $v$ is considered \emph{flush-eligible} if it satisfies the
following flush-eligibility predicate:
\begin{equation*}
\eligible_s(v) \;\equiv\; 
	(\,\exists j \in \braced{l,...,f-1} \suchthat |\pendingpackets{v}{j}|\ge \alpha_j\,)
			\,\vee\,
			(\,\exists j \in \braced{f,...,s} \suchthat |\pendingpackets{v}{j}|\ge \beta_j \,)
\end{equation*}
That is, flush-eligibility is determined using threshold values $\alpha_j$ and $\beta_j$ for the
number of stored pending packets, for low- and high-degree vertices, respectively.
Recall that $\dischargetime = \tildeO(n)$ is the running time of $\Discharge()$. 

\begin{center}
\framebox{
\begin{minipage}{6.2in}
\begin{tabbing}
aa \= aaa \= aaa \= aa \= aa \= aa \= \kill
\textbf{Procedure}~$\Condense(s)$ \\
\> \textbf{input:} a packet $\packet_v$ for each $v\in V$ \\
\> \textbf{output:} for each $v\in V_{s}$, $\packet_v$ is quasi-disseminated or delivered to $V_{s+1} \cup ... \cup V_h$ \\
\> Let $A_s   \;=\; \textstyle \sum_{j=l}^{f-1}\alpha_j\LayerSelLatency_j  + 2\sum_{j=f}^{s}\beta_j\LayerTSLatency_j$
	\ and \  $F_s \;=\; \textstyle \sum_{j=l}^{f-1} {|V_j|}/{\alpha_j}+\sum_{j=f}^{s} {|V_{j}|}/{\beta_j}$\\
\> Let $T_s \;=\; A_s + 3\dischargetime F_s$ \\
\> \textbf{In parallel} execute the following processes for time $T_s$: \\
\>\> \textit{Channel $j=l,\ldots,f-1$:} Repeatedly run the $(n,2^j,2^j)$-selector \\
\>\> \textit{Channel $j=f,\ldots,s$:} Repeatedly run $\TripleSelFull{n}{2^j}{\beta_s}$ for packet set $\packetset = \packet(V_s)$ \\
\>\> \textit{RR Channel:} Repeatedly execute $\RoundRobin$ \\
\>\> \textit{Flush Channel:} Repeatedly execute $\Discharge(\eligible_s)$
\end{tabbing}
\end{minipage}
}
\end{center}

\myitparagraph{Intuitions:} While Procedure~$\Condense(s)$ is based on tradeoffs similar to those behind the protocol in
Section~\ref{sec: simpler protocol}, some important challenges arise. Consider a path $P$ traveled by a packet
$\packet_v$ for $v\in V_s$. The vertices of low degree (at most $2^{f-1}$) are traversed quickly, using selectors.
But $\packet_v$ can get ``stuck'' attempting to traverse an edge whose endpoint $y$ has high degree, if the number
of packets from $V_s$ in the in-neighborhood of $y$ is above the threshold $\beta_s$ for $\TripleSelFull{n}{2^j}{\beta_s}$.
This delay can be very large,
but a large delay at $y$ can only happen while Procedure~$\Discharge()$ executes flush operations,
so this delay can be charged to other packets being flushed. With this in mind, the ``uncharged'' delay will
be small. So ignoring these charged time steps, $\packet_v$ will traverse quickly even high degree nodes.
(Assuming it does not get flushed itself.)
The flush threshold values $\alpha_j$ and $\beta_j$ are chosen so that for some $j$ the number of vertices from $V_j$ on
$P$ will exceed the flush threshold for $j$. Since the packets from these
vertices travel together with $\packet_v$, they will all reach the last vertex of $P$, making it
flush eligible. This ensures that some vertex on $P$ will get flushed later, which means that 
$\packet_v$ will get quasi-disseminated.

\smallskip
We now formalize this argument. In the two lemmas below we estimate the running time and show correctness 
of Procedure~$\Condense(s)$.


\begin{lemma}\label{lem: condense running time}
For each $s = f,...,h$, the running time of Procedure~$\Condense(s)$ is
$T_s = \tilde{O} (2^{(6h+3l)/5}) \;=\; \tildeO((mn)^{\slthreefifths})$.
\end{lemma}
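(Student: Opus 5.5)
Since the number of channels is $O(\log n)$, and $T_s = A_s + 3\dischargetime F_s$ with $\dischargetime=\tildeO(n)=\tildeO(2^h)$, it suffices to bound each summand of $A_s$ and of $2^h F_s$ by $\tildeO(2^{(6h+3l)/5})$. Each sum has at most $h-l+1 = O(\log n)$ terms, so a per-term bound is enough. At the end I will convert the exponent using $2^h=n$ and $2^l=\aveindegree=m/n$. This gives $2^{(6h+3l)/5} = n^{6/5}(m/n)^{3/5} = (mn)^{3/5}$.

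\emph{Low layers, $l\le j\le f-1$.} For the selector terms I use $\alpha_j\LayerSelLatency_j = \tildeO(2^{(2h+l-3j)/2}\cdot 2^{2j}) = \tildeO(2^{(2h+l+j)/2})$. This increases in $j$, so it is maximized at $j=f$, where it equals $2^{(2h+l+f)/2}$. Substituting $f=(2h+l)/5$ gives $(2h+l)(1+1/5)/2 = 3(2h+l)/5$, which is exactly the target exponent. For the flush terms, Observation~\ref{obs: bound on Vj} gives $2^h|V_j|/\alpha_j \le 2^h\cdot 2^{h+l-j+1}/2^{(2h+l-3j)/2} = O(2^{(2h+l+j)/2})$. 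This is the same expression, so the same bound holds.

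\emph{High layers, $f\le j\le s$.} For the \TripleSel{} terms I use $\beta_j\LayerTSLatency_j = \tildeO(2^j\beta_j^2) = \tildeO(2^{j+2(2h+l-2j)/3}) = \tildeO(2^{(4h+2l-j)/3})$. This decreases in $j$, so it is maximized at $j=f$, with exponent $(4h+2l)/3 - (2h+l)/15 = (2h+l)(2/3-1/15) = 3(2h+l)/5$. For the flush terms, $2^h|V_j|/\beta_j \le 2^h\cdot 2^{h+l-j+1}/2^{(2h+l-2j)/3} = O(2^{(4h+2l-j)/3})$, again the same expression, maximized at $j=f$ with the same value.

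One detail needs care. The bound $\LayerTSLatency_j = \tildeO(2^j\beta_j)$ comes from Lemma~\ref{lem: TMAC protocol}, which requires $\beta_j\le 2^j$, i.e. $r\le k$. I check this for $j\ge f$: the condition $(2h+l-2j)/3\le j$ is equivalent to $j\ge (2h+l)/5=f$, so it holds exactly at the threshold. This is presumably why $f$ was chosen this way.

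The main obstacle is the mismatch between the analyzed quantities and what the procedure actually runs. The channels run $\TripleSelFull{n}{2^j}{\beta_s}$ rather than with $\beta_j$. The formula for $A_s$, however, is stated with $\LayerTSLatency_j$, so the running time bound only needs the parameter definitions. The other subtlety is rounding of $f$ and of the exponents, which only affects constants. With these checks in place, all $O(\log n)$ terms are $\tildeO(2^{3(2h+l)/5})$, and hence $T_s=\tildeO((mn)^{\slthreefifths})$.
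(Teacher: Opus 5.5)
Your proof is correct and follows the paper's argument essentially step for step. You bound each of the $O(\log n)$ summands of $A_s$ and $\dischargetime F_s$ by $\tildeO(2^{(2h+l+j)/2})$ for $j<f$ (increasing in $j$) and by $\tildeO(2^{(4h+2l-j)/3})$ for $j\ge f$ (decreasing in $j$), evaluate both at $j=f$ to get $2^{(6h+3l)/5}=(mn)^{\slthreefifths}$, and absorb the $O(\log n)$ channels into the $\tildeO$; your added check that $\beta_j\le 2^j$ holds exactly when $j\ge f$, which is what makes Lemma~\ref{lem: TMAC protocol} give $\LayerTSLatency_j=\tildeO(2^j\beta_j)$, is a useful detail the paper leaves implicit.
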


\begin{proof}
Consider some $j \le f-1$. Using the definitions of $\alpha_j$'s and $\LayerSelLatency_j$'s, Observation~\ref{obs: bound on Vj}, 
and the estimate $\dischargetime = \tildeO(2^h)$, a straightforward calculation shows that
\begin{equation*}
 \alpha_j\LayerSelLatency_j \;=\; \tilde{O}(2^{(2h+l+j)/2}) 
 \quad\textrm{and}\quad
 \dischargetime |V_{j}|/\alpha_j  \;=\; \tilde{O}(2^{(2h+l+j)/2})
\end{equation*}
The value of $2^{(2h+l+j)/2}$ grows with $j$, so, since $j<f$, we obtain that both $\alpha_j\LayerSelLatency_j$ and
$3\dischargetime |V_{j}|/\alpha_j$ are of order $\tilde{O}(2^{(2h+l+f)/2}) = \tilde{O}(2^{(6h+3l)/5})$.

Next, consider some $f\le j\le s$. Using the definitions of $\beta_j$'s and $\LayerTSLatency_j$'s, Observation~\ref{obs: bound on Vj}, and
the estimate $\dischargetime = \tildeO(2^h)$, we obtain that
\begin{equation*}
 \beta_j\LayerTSLatency_j \;=\; \tilde{O}(2^{(4h+2l-j)/3}) 
 \quad\textrm{and}\quad
 \dischargetime |V_{j}|/\beta_j  \;=\;\tilde{O}(2^{(4h+2l-j)/3}) 
\end{equation*}
The value of $2^{(4h+2l-j)/3}$ decreases with $j$, so, since $j\ge f$, both $\beta_j\LayerTSLatency_j$ and $3\dischargetime |V_{j}|/\beta_j$
are of order $\tilde{O}(2^{(4h+2l-f)/3}) = \tilde{O}(2^{(6h+3l)/5})$ as well.

The above bounds imply that both terms $A_s$ and $3\Phi F_s$ of $T_s$ are of order $\tilde{O} (2^{(6h+3l)/5})$, because each consists of
$O(\log n)$ terms estimated above. This shows that
\begin{equation*}
T_s \;=\; \tilde{O} (2^{(6h+3l)/5}) \;=\; \tildeO((mn)^{\slthreefifths})
\end{equation*}
Procedure~$\Condense(s)$ uses $O(\log n)$ radio channels, so, as explained in Section~\ref{sec: preliminaries},
it can be simulated on a single channel without affecting the $\tilde{O}$-estimate of the running time,
completing the proof of Lemma~\ref{lem: condense running time}.
\end{proof}


\begin{lemma}\label{lem: condense correct}
Assume that initially each vertex $v\in V$ has a packet $\packet_v$. Then, for each $v\in V_{s}$, 
after the completion of Procedure~$\Condense(s)$ packet $\packet_v$ is either quasi-disseminated or
delivered to some node in $V_{s+1} \cup ... \cup V_h$.
\end{lemma}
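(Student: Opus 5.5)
Fix $v\in V_s$ and suppose, for contradiction, that at the end of $\Condense(s)$ packet $\packet_v$ is neither quasi-disseminated nor delivered to $V_{s+1}\cup\dots\cup V_h$. Then no vertex holding $\packet_v$ is ever flushed. Otherwise $\packet_v$ would be delivered to a vertex whose packet was disseminated, or it would be flushed itself. The plan mirrors the correctness proof of Protocol~\SimpleSparseQuasiGossip. There are two pieces: (i) a flush-count bound analogous to Observation~\ref{obs: simple prot, number of flushes}, namely that at most $F_s$ executions of $\Discharge(\eligible_s)$ can actually flush; and (ii) a path-traversal bound analogous to Observation~\ref{obs: simple prot, path traveled}. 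Piece (i) is immediate. Each flush triggered by the $j$-th clause removes at least $\alpha_j$ (resp.\ $\beta_j$) pending packets of $\packet(V_j)$ from every list, so there are at most $|V_j|/\alpha_j$ (resp.\ $|V_j|/\beta_j$) such flushes, and summing gives $F_s$.

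For (ii), I would classify each time step as \emph{charged} if the $\Discharge$ execution containing it performs a flush, and \emph{uncharged} otherwise. There are at most $F_s\dischargetime$ charged steps. The slack in $T_s=A_s+3\dischargetime F_s$ is meant to absorb these steps plus boundary effects: partial $\Discharge$ blocks, and $\TripleSel$ runs that straddle the moment some vertex stops being eligible. Now consider the set $W$ of vertices reached by $\packet_v$. Since $\packet_v$ never reaches $V_{s+1}\cup\dots\cup V_h$ (by assumption), any path along which it travels uses only vertices in layers $l,\dots,s$. Edges into $V_j$ with $j<f$ are traversed within $\LayerSelLatency_j$ steps by the strong selector on Channel $j$. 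For an edge $(x,y)$ with $y\in V_j$, $f\le j\le s$, consider the packets of $\packet(V_s)$ held by in-neighbors of $y$. If $y$ is not flush-eligible via the $s$-clause, then in particular $y$'s in-neighbors collectively hold few $V_s$-packets, which is what Lemma~\ref{lem: TMAC protocol} needs. This is the delicate point. If at most $\beta_s$ such packets are pending during a full uncharged $\TripleSel$ run, Lemma~\ref{lem: TMAC protocol} (with $k=2^j$, $r=\beta_s$) delivers $\packet_v$ to $y$ within one run. Two runs account for the alignment, hence the factor $2$ in $A_s$, though the run latency is $\TSLatency{n}{2^j}{\beta_s}$ rather than $\LayerTSLatency_j$, and I must check that the term $\beta_j\LayerTSLatency_j$ still dominates. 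If more than $\beta_s$ are pending, I would argue that then some in-neighbor of $y$, or $y$ after receiving, is flush-eligible, so the $\Discharge$ flushes, making those steps charged.

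Conclude as in Section~\ref{sec: simpler protocol}. Take $w$ closest to $v$ not reached, with in-neighbor $u$ reached via a path $P$. Because uncharged time is at least $A_s$, $Pw$ must contain more than $\alpha_j$ (or $\beta_j$) vertices of some $V_j$, hence $P$ contains at least that many. Their pending packets travelled along with $\packet_v$, so $u$ is flush-eligible. From then on every $\Discharge$ flushes until some packet on $P$ is flushed, which by (i) and the remaining time happens. Then $\packet_v$ reaches a vertex whose packet is disseminated, a contradiction.

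The main obstacle is the high-degree step. The hypothesis of Lemma~\ref{lem: TMAC protocol} concerns packets held by \emph{all} in-neighbors of $y$, not packets at a single vertex, so relating ``more than $\beta_s$ $V_s$-packets around $y$'' to flush-eligibility of a single vertex is not automatic. My first attempt would be to use the packet sets that keep changing within a run and to charge a stalled run to the flush of the vertex that later collects them. This needs care and may require the extra factor $3$ in the discharge term.
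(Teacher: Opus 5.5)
Your skeleton matches the paper's: the flush-count bound $F_s$, per-edge delays counted only over non-charged time, $s$-bounded paths, and the closest-unreached-vertex argument. But the step you flag as the main obstacle is left unresolved, and with your definition of ``charged'' it is actually false.

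If charged steps are only those of a $\Discharge$ that flushes, a $\TripleSel$ run into $y\in V_j$ ($j\ge f$) can stall during uncharged time. The in-neighbors of $y$ may collectively hold $\ge\beta_s$ pending $V_s$-packets while each holds fewer than $\beta_s$ and $y$ has not yet received them. Then no vertex is flush-eligible and the current $\Discharge$ does nothing. In particular, your sentence ``if $y$ is not flush-eligible via the $s$-clause, then $y$'s in-neighbors collectively hold few $V_s$-packets'' does not hold. Eligibility of $y$ constrains only $\pendingpackets{y}{s}$, not $\bigcup_{z\in\inneighbors(y)}\pendingpackets{z}{s}$, which is what Lemma~\ref{lem: TMAC protocol} needs.

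The missing idea is to use the RR channel together with a two-frame lag in the accounting. Cut time into frames of length $\dischargetime$, one $\Discharge$ each. Call a frame \emph{paid} if it or one of the next two frames contains a flush. There are at most $3F_s$ paid frames, and this is exactly where the $3\dischargetime F_s$ in $T_s$ comes from.

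Then prove the invariant: at every unpaid step, $|\bigcup_{z\in\inneighbors(y)}\pendingpackets{z}{s}|<\beta_s$ for all $y\in V_f\cup\dots\cup V_s$. Suppose the union is $\ge\beta_s$ at a step in frame $b$, and frames $b$ and $b+1$ contain no flush. Then those packets stay pending, since without flushes pending sets only grow. The complete $\RoundRobin$ round inside frame $b+1$ delivers all of them to $y$. So $|\pendingpackets{y}{s}|\ge\beta_s$ at the start of frame $b+2$, $y$ is eligible, and frame $b+2$ flushes. Hence frame $b$ is paid.

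Note that the charge goes to \emph{any} flush in frames $b,b+1,b+2$, not to a flush of the collecting vertex. Your idea of charging ``to the vertex that later collects them'' breaks if an intervening flush removes some of those packets, since the collector may then never become eligible. With the invariant in hand, any window of unpaid steps containing a full run gives the $2\LayerTSLatency_j$ bound.

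Two smaller points remain open in your write-up.
\begin{itemize}
\item The latency check is easy. $\beta_j=2^{(2h+l-2j)/3}$ decreases in $j$, so $\beta_s\le\beta_j$ for $j\le s$, and hence $\TSLatency{n}{2^j}{\beta_s}\le\LayerTSLatency_j$.
\item Your final argument does not handle the case where the closest unreached $w$ lies in $V_{s+1}\cup\dots\cup V_h$. Then $Pw$ is not inside layers $l,\dots,s$ and the $s$-bounded-path bound does not apply. You need one more $\RoundRobin$ round, fitting into the leftover slack when fewer than $F_s$ flushes occur, to deliver $\packet_v$ to $w$. This contradicts your standing assumption.
\end{itemize}
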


\begin{proof}
We divide the execution into \emph{frames} of length $\dischargetime$, so that each frame consists of
exactly one execution of $\Discharge()$.
If a frame contains a flush, this frame and the two preceding frames are called \emph{paid}; all other frames are \emph{unpaid}.
(The term ``paid'' reflects the fact that the delay caused by such frames can be charged to the flushed rumors.)
We first establish some useful properties of the execution of Procedure~$\Condense(s)$, captured in the claims below.


\begin{claim}\label{cla: paid frames}
The number of flushes during the execution of Procedure~$\Condense(s)$ is at most $F_s$.
Consequently, the number of paid frames is at most $3F_s$. 
\end{claim}

To justify Claim~\ref{cla: paid frames}, note that
a flush triggered by inequality $|\pendingpackets{v}{j}|\ge \alpha_j$, for $j < f$,
can occur at most ${|V_j|}/{\alpha_j}$ times. Similarly, an inequality $|\pendingpackets{v}{j}|\ge \beta_j$, for $j \ge f$,
can trigger at most ${|V_j|}/{\beta_j}$ flushes. By summing up these bounds,
the number of flushes is bounded by $\sum_{j=l}^{f-1} {|V_j|}/{\alpha_j}+\sum_{j=f}^{s} {|V_{j}|}/{\beta_j} = F_s$.
The definition of paid frames immediately implies the $3F_s$ upper bound on their number,
completing the proof of Claim~\ref{cla: paid frames}.

Claim~\ref{cla: paid frames} gives us a bound of $3\dischargetime F_s$ for the total duration of paid frames.
We still need to show that there are enough time steps within the unpaid frames for 
Procedure~$\Condense(s)$ to complete its objective. 
All time steps within unpaid frames will be also called \emph{unpaid}. We now focus on these unpaid time steps.


\begin{claim}\label{cla: unpaid times}
Each vertex $y \in V_f\cup ...\cup V_s$ at any unpaid time step $\tau$ satisfies
$|\bigcup_{z\in\inneighbors(y)}\pendingpackets{z}{s}|<\beta_s$.
\end{claim}

To prove Claim~\ref{cla: unpaid times}, we prove its contra-positive. Suppose that
$|\bigcup_{z\in\inneighbors(y)}\pendingpackets{z}{s}| \ge \beta_s$ for some $y \in V_f\cup ...\cup V_s$
at some time step $\tau$. Let $b$ be the frame containing step $\tau$.
We argue that the protocol would execute a flush in frame $b$, $b+1$ or $b+2$, thus making frame $b$ paid.

If a flush is executed in frames $b$ or $b+1$, we are done, so supposed it's not.
Then, for each $z\in \inneighbors(y)$, the packets in $\pendingpackets{z}{s}$ at time $\tau$ will
remain pending until the beginning of frame $b+2$, so during this time the cardinality of $\pendingpackets{z}{s}$  can only get larger. 
As frame $b+1$ will execute $\RoundRobin$ (at least once), 
by the beginning of frame $b+2$ all packets from $\bigcup_{z\in \inneighbors(y)}\pendingpackets{z}{s}$
will reach $y$, so at that time we will have $|\pendingpackets{y}{s}| \ge \beta_s$, making $y$ flush-eligible.
Since there is at least one flush-eligible vertex (namely $y$) at the beginning of frame $b+2$,
frame $b+2$ will execute a flush. This completes the proof of Claim~\ref{cla: unpaid times}.


\begin{claim}\label{cla: traversing edge (x,y)}
Consider an edge $(x,y)$ and suppose that some packet $\packet_v$, for $v\in V_s$, is in $x$ at some time step $\tau$,
and let $y \in V_j$ where $j\le s$. Then
\begin{description}[nosep]
\item{(i)} If $j < f$ then after time $\LayerSelLatency_j$ packet $\packet_v$ will either get flushed or reach $y$.
\item{(ii)} If $j\ge f$ then after at most $2\LayerTSLatency_j$ unpaid time steps 
			packet $\packet_v$ will either get flushed or reach $y$.
\end{description}
\end{claim}

\myitparagraph{Comment:} We emphasize that in part~(ii), that is for $j\ge f$, $\packet_v$ may not necessarily travel directly from $x$ to $y$.
Claim~\ref{cla: traversing edge (x,y)} only guarantees that \emph{some} copy of $\packet_v$
will reach $y$. This is a key difference between the behavior of selectors and Protocol~$\TripleSel()$.

\smallskip

Part~(i) of Claim~\ref{cla: traversing edge (x,y)} follows directly from the definition of selectors, so we
focus on the proof of Part~(ii).
Let $\tau'$ be the first time step after $\tau$ for which consecutive steps $\tau',\tau'+1,...,\tau'+\LayerTSLatency_j$
are unpaid. (If $\tau$ is unpaid and at least $\LayerTSLatency_j$ steps before the end of its frame, then $\tau' = \tau$.
Otherwise, $\tau'$ is the beginning of the first unpaid frame after $\tau$.)
Then there are at most $2\LayerTSLatency_j$ unpaid time steps between $\tau$ and $\tau'+\LayerTSLatency_j$.
If $\packet_v$ gets flushed before time $\tau'+\LayerTSLatency_j$, we are done.
Otherwise, since all steps $\tau',\tau'+1,...,\tau'+\LayerTSLatency_j$ are unpaid, Claim~\ref{cla: unpaid times}
implies that at each of these steps we have $|\bigcup_{z\in \inneighbors(y)}\pendingpackets{z}{s}| < \beta_s$.
Further, since $\LayerTSLatency_j \ge \TSLatency{n}{2^j}{\beta_s}$,
this time interval contains a complete execution of $\TripleSelFull{n}{2^j}{\beta_s}$.
Therefore during this time interval at least one copy of $\packet_v$ will reach $y$, 
completing the proof of Claim~\ref{cla: traversing edge (x,y)}(ii).

\smallskip

Fix some vertex $v\in V_{s}$ from the lemma.  Call a path $Q$ starting at $v$ \emph{$s$-bounded} if 
(i) all vertices of $Q$ belong to $V_l\cup ... \cup V_s$,
(ii) for each $j = l,...,f-1$ it contains at most $\alpha_j$ vertices from $V_j$ , and 
(iii) for each $j = f,...,s$ it contains at most $\beta_j$ vertices from $V_j$. 


\begin{claim}\label{cla: s-bounded paths}
If a path $Q$ starting at a vertex $v\in V_s$ is $s$-bounded then after at most $A_s$ unpaid steps
packet $\packet_v$ will either get flushed or it will reach all vertices on $Q$.
\end{claim}

Claim~\ref{cla: s-bounded paths} follows directly from Claim~\ref{cla: traversing edge (x,y)}:
for each edge $(x,y)$ on $Q$, the unpaid delay 
(the number of unpaid time steps) between the times when $\packet_v$ reaches $x$ and when it reaches $y$
is bounded by $\LayerSelLatency_j$ for $j = l,...,f-1$ and by $2\LayerTSLatency_j$ for $j = f,...,s$.
(Providing that it does not get flushed earlier.) Summing these bounds, the traversal of $Q$ will take time at 
most $\sum_{j=l}^{f-1}\alpha_j\LayerSelLatency_j  + 2\sum_{j=f}^{s}\beta_j\LayerTSLatency_j = A_s$.

\medskip
To complete the proof of Lemma~\ref{lem: condense correct}, we now argue as follows. If Procedure~$\Condense(s)$ executes $F_s$
flushes then \emph{all} packets will be quasi-disseminated, and we are done. So assume
that the number of flushes is at most $F_s-1$.

Let $\tau$ be the first time step such that the interval $[0,\tau]$ contains $A_s$ 
unpaid steps. Since we have fewer than $3F_s$ paid frames, the formula for $T_s$ implies that
$\tau$ is well defined, and in fact $\tau \le T_s - F_s$.

If $\packet_v$ is already quasi-disseminated by time $\tau$, we are done. So assume it's not. Then there is
a vertex $w \in V$ that has not received $\packet_v$ by time $\tau$. Choose such a $w$ to have the minimum distance from $v$. 
Then some $u\in\inneighbors(w)$ has received $\packet_v$ by time $\tau$.
If $w \in V_{s+1}\cup ... \cup V_h$, then one more round of $\RoundRobin$ will deliver  $\packet_v$ to $V_{s+1}\cup ... \cup V_h$
by time $\tau+n\le T_s-F_s + n \le T_s$.  Therefore we can assume that $w\in V_l \cup ...\cup V_s$.

Let $P$ be the path from $v$ to $u$ traveled by a particular copy of $\packet_v$ in $u$.
(If multiple copies of $\packet_v$ reached $u$, choose this copy arbitrarily.)
Since $\packet_v$ is not yet quasi-disseminated, none of the packets from $P$ have been flushed.
If $P$ contains a vertex in $V_{s+1}\cup ...\cup V_h$  then $\packet_v$ already satisfies the lemma.
Therefore we can assume that all vertices of $Pw$ are in $V_l \cup ...\cup V_s$.

Since $w$ has not received $\packet_v$ by time $\tau$, Claim~\ref{cla: s-bounded paths} implies that
path $Pw$ is \textit{not} $s$-bounded. So there is an index $j\in \braced{l,...,s}$ for which
$Pw$ has either strictly more than $\alpha_j$ vertices from $V_j$, if $j < f$, or
strictly more than $\beta_j$ vertices from $V_j$, if $j\ge f$.
Then, if $j < f$ then $P$ has at least $\alpha_j$ vertices from $V_j$, and if
$j\ge f$ then $P$ has at least $\beta_j$ vertices from $V_j$.
The packets from $V_j\cap P$ have not been flushed before time $\tau$, and (importantly) they
all traveled with $\packet_v$ along $P$ so they are all in $\pendingpackets{u}{j}$ at time $\tau$,
making $u$ flush eligible at time $\tau$.
Further, $u$ will remain flush eligible as long as none of the packets from $V_j\cap P$ gets flushed,
so until this time all frames will execute flush operations.
Then the definition of $\tau$ and the formula of $T_s$ imply that one of the packets from $V_j\cap P$
must get flushed before time $T_s$, and therefore $\packet_v$ will be quasi-disseminated when
Procedure~$\Condense(s)$ ends.
\end{proof}


\myparagraph{Faster quasi-gossiping protocol.}
Our main quasi-gossiping protocol is divided into phases. For convenience, we number these phases
$f\!-\!1,f,...,h$. In Phase~$f-1$ we execute in-parallel $(n,2^j,2^j)$-selectors, for $j = l,...,f-1$, 
along with repeatedly executing $\RoundRobin$ and $\Discharge()$ on their own separate channels.
After this phase, each rumor 
will either get quasi-disseminated or will end up in a vertex of degree at least $2^{f-1}$, that is in $V_{f-1}\cup ... V_h$.
In the remaining phases $s = f,...,h$,
 we gradually ``compress'' the set of vertices containing pending rumors, by moving the pending rumors
to vertices of higher and higher degrees. This is accomplished using Procedure~$\Condense(s)$.

As in Section~\ref{sec: simpler protocol}, for each vertex $v\in V$ and each $j=l,...,h$, by
$\pendingrumors{v}{j} = \pendingrumors{v}{} \cap \rho(V_j)$ we denote the set 
of pending rumors in $v$ that originated from vertices in layer $V_j$.
(These rumors might have been received in Phase~$f-1$ or in packets from phases $f,...,s$.)
At the beginning of Phase~$s$, each node $v\in V_{s}$ will pack the rumors from $\pendingrumors{v}{}$
into a fresh Phase-$s$ packet $\packet_v$, while all other nodes will create empty packets.
Procedure~$\Condense(s)$ will then either quasi-disseminate packets (and thus their rumors)
from $V_{s}$ or move them to $V_{s+1}\cup ...\cup V_h$.
In Phase~$f-1$, we use the following flush-eligibility predicate:
$\eligible'(v) \equiv (\exists j \in \braced{l,...,f-1} \suchthat |\pendingrumors{v}{j}|\ge \alpha_j)$
(similar to the one in Section~\ref{sec: simpler protocol}).

\begin{center}
\framebox{
\begin{minipage}{3in}
\begin{tabbing}
aa \= aaaa \= aaaa \= aaaa \= aaaa \= aaaa \= \kill
\textbf{Protocol}~\SparseQuasiGossip \\
\> \textbf{input:} a rumor $\rumor_v$ for each $v\in V$ \\
\> \textbf{output:} for each $v\in V$, rumor $\rumor_v$ is quasi-disseminated \\
\> \emph{Phase~$f-1$}: \\
\>\> Let $A_{f-1}  = \sum_{j=l}^{f-1}\alpha_j\LayerSelLatency_j$ \ and \ $F_{f-1}  = \sum_{j=l}^{f-1}|V_j|/\alpha_j+1$ \\
 \>\> Let $ T_{f-1} = A_{f-1} + 3 \dischargetime F_{f-1}$.\\
 \>\> \textbf{In-parallel} execute the following processes for time $T_{f-1}$:\\
\>\>\> \textit{Channel~$j=l,\dots,f-1$:} Run the $(n,2^j,2^j)$-selector \\
\>\>\> \textit{RR Channel:} Repeatedly execute $\RoundRobin$ \\
\>\>\> \textit{Flush Channel:} Repeatedly execute $\Discharge(\eligible')$ \\
\> \emph{Phase~$s = f,...,h$ (sequentially)}:\\
\>\> If $v\in V_{s}$, let $\packet_v = \pendingrumors{v}{}$, else $\packet_v = \emptyset$. \\
\>\> Execute $\Condense(s)$ 
\end{tabbing}
\end{minipage}
}
\end{center}


We are now ready to complete the proof of our main result, Theorem~\ref{thm: improved bound sparse graphs}.
By Lemma~\ref{lem: condense running time}, each Phase~$s = f,...,h$ takes time  $\tildeO((mn)^{\slthreefifths})$,
and the analysis in the proof of Lemma~\ref{lem: condense running time} applies also to Phase~$f\!-\!1$
giving the same time bound. Since the number of phases is $O(\log n)$, we obtain that the
running time of Protocol~{\SparseQuasiGossip} is $\tildeO((mn)^{\slthreefifths})$.

The correctness follows quite easily from the correctness of Procedure~$\Condense()$:
By Lemma~\ref{lem: condense correct} and  by induction on $s$, after each Phase~$s = f-1,...,h$, 
all rumors are either quasi-disseminated or are in $V_{s+1} \cup ...\cup V_h$. 
Taking $s=h$, since $V_{h+1} \cup ...\cup V_h = \emptyset$,
when Protocol~{\SparseQuasiGossip} completes all packets will be quasi-disseminated.
This proves correctness of Protocol~{\SparseQuasiGossip} and
completes the proof of Theorem~\ref{thm: improved bound sparse graphs}.



\section{Gossiping in $\Delta$-Regular Graphs}
\label{sec: delta-regular graphs}


In this section we prove Theorem~\ref{thm: Delta-regular graphs}.
To this end, we provide a deterministic protocol that, for each $\Delta$-regular
strongly-connected directed graph, completes gossiping in time $\tildeO(\Delta^\slhalf n)$.
For $\Delta\ge n^{\sltwothirds}$, this running time is achieved by the
general gossiping protocol from~\cite{Gasieniec_etal_det_gossip_04}, so in this
section we assume that $\Delta< n^{\sltwothirds}$.

Fix an $n$-vertex $\Delta$-regular strongly-connected directed graph $G$.
For an integer $d\ge 0$, define the \emph{distance-$d$ out-neighborhood of a vertex $v$},
denoted $\outneighbors_d(v)$, to be the set of vertices reachable from $v$
along paths of length at most $d$. So $\outneighbors_0(v) = \braced{v}$
and $\outneighbors_1(v) = \braced{v}\cup \outneighbors(v)$.
We start with the following lemma:


\begin{lemma}\label{lem: regular, d-outneighbors}
For each vertex $v$, and integer $d\ge 1$,
we have $|\outneighbors_d(v)| \ge  \min(n,\onethird (d+1)\Delta)$.
\end{lemma}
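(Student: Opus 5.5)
We will use BFS layers and count in-edges. Let $L_i$ be the set of vertices at distance exactly $i$ from $v$, so $\outneighbors_d(v) = L_0\cup\dots\cup L_d$. If some $L_i$ with $i\le d$ is empty, then $\outneighbors_d(v)$ is closed under out-neighbors. Strong connectivity then gives $\outneighbors_d(v)=V$, so the size is $n$ and we are done. Hence we may assume $L_0,\dots,L_d$ are all nonempty.

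The key local fact concerns a vertex $x\in L_i$. Each of its $\Delta$ out-neighbors lies in $L_0\cup\dots\cup L_{i+1}$. Likewise, each of its $\Delta$ in-neighbors lies in $L_{i-1}\cup L_i\cup\dots$ (an in-neighbor at distance $j$ forces $i\le j+1$). Neither direction alone pins the neighbors to a window of three layers. The plan is therefore to use regularity via edge counting across cuts, in the style of a "three consecutive layers" argument.

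Fix $i$ with $1\le i\le d-1$ and let $U=L_0\cup\dots\cup L_i$. Every vertex of $U$ has out-degree $\Delta$, and every vertex has in-degree $\Delta$, so the number of edges leaving $U$ equals the number of edges entering $U$. The edges leaving $U$ all go into $L_{i+1}$. We now pick a vertex $y\in L_{i+1}$ and look at its $\Delta$ out-neighbors. They lie in $L_0\cup\dots\cup L_{i+2}$, which does not immediately localize them.

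The cleaner route is to bound $|L_{i-1}\cup L_i\cup L_{i+1}|$ directly. Take $x\in L_i$. Its $\Delta$ in-neighbors $z$ satisfy $\mathrm{dist}(z)\ge i-1$. So the in-neighborhood of $x$ lies in $L_{i-1}\cup L_i\cup L_{i+1}\cup (\text{farther layers})$, and the far ones must be excluded. Instead, we count edges from $U=L_0\cup\dots\cup L_{i-1}$ into its complement. By regularity, the number of edges leaving $U$ equals the number of edges entering $U$ from outside. Edges entering $U$ from outside come from arbitrary layers, but edges leaving $U$ land only in $L_i$. Then the in-degree bound gives $|L_i|\Delta \ge e(U,\bar U)$. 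We also need $e(U,\bar U)$ large. Taking $x\in L_{i-1}$, all its out-edges go to $L_{i-2}\cup L_{i-1}\cup L_i$ or below, which gives nothing immediate.

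So the main obstacle is this: the naive "$x\in L_i$ has all $\Delta$ neighbors in three consecutive layers" bound fails for directed graphs, because back edges can jump far. The approach we will try first is the following. For each $i\le d-1$, show that $|L_{i}\cup L_{i+1}|\ge$ something like $\tfrac{2}{3}\Delta$. Alternatively, show that every three consecutive layers $L_{i-1}\cup L_i\cup L_{i+1}$ together contain at least $\Delta$ vertices, using the out-neighbors of $x\in L_{i}$ intersected with the in-neighbor counts. Then summing over disjoint triples of layers gives $\lfloor (d+1)/3\rfloor\Delta$. Combined with the trivial bound $|L_0\cup L_1|\ge \Delta+1$ for the leftover layers, this yields $\tfrac13(d+1)\Delta$.

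A concrete way to get the triple bound is to take $W=L_0\cup\dots\cup L_{i}$. If $|L_{i-1}\cup L_i\cup L_{i+1}|<\Delta$, then a vertex $x\in L_i$ sends at least one out-edge into $L_0\cup\dots\cup L_{i-2}$. Aggregating this over all of $L_i$ and using the in-degree capacity $\Delta|L_{\le i-2}|$ against the out-degree supply of the cut, we would aim to derive a contradiction with the equality of in- and out-cut sizes. This aggregation step is where we expect the real work, and we may need to refine it.
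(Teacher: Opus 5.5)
\textbf{Gap 1: the key inequality is never proved.} Your reduction to the case where $L_0,\dots,L_d$ are all nonempty is fine. The ``three consecutive layers'' bound you aim for is also true, in the sharper form $|L_{i-1}|+|L_i|+|L_{i+1}|\ge\Delta+1$ for $i\ge1$, $L_i\neq\emptyset$. But the proposal never proves it, and the counting you sketch cannot. You bound the arcs leaving $L_0\cup\dots\cup L_{i-1}$ by the in-degree capacity $\Delta|L_i|$, and later by $\Delta|L_{\le i-2}|$. You also look for back-arcs from $L_{i-1}$, or settle for one back-arc per vertex of $L_i$.

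The missing step is to take the cut $U=L_0\cup\dots\cup L_{i-1}$ with two different estimates:
\begin{itemize}
\item Each $x\in L_i$ has all its out-neighbors in $(L_0\cup\dots\cup L_{i+1})\setminus\{x\}$, so it sends at least $\Delta+1-|L_i|-|L_{i+1}|$ arcs into $U$.
\item Every arc leaving $U$ goes from $L_{i-1}$ to $L_i$, so there are at most $|L_{i-1}|\cdot|L_i|$ of them.
\end{itemize}
Regularity makes the in- and out-counts across the cut equal. Hence $|L_i|(\Delta+1-|L_i|-|L_{i+1}|)\le|L_{i-1}|\,|L_i|$, and dividing by $|L_i|\ge1$ finishes. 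This is the same mechanism as the paper's inequality~(1). The paper splits into cases at $|L_j\cup L_{j+1}|=\frac23\Delta$ and uses pigeonhole to find a vertex of $L_j$ with $\frac13\Delta$ in-neighbors in $L_{j-1}$; your uncased form is slightly cleaner. Your other option, $|L_i\cup L_{i+1}|\ge\frac23\Delta$, is false: in the example below $|L_2\cup L_3|=3$.

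\textbf{Gap 2: the assembly step fails and cannot be repaired.} Disjoint triples give $\lfloor (d+1)/3\rfloor(\Delta+1)$. For $d\equiv1\pmod 3$ you can open with $\{L_0,L_1\}$, and for $d\equiv2\pmod 3$ triples alone suffice. For $d\equiv0\pmod3$, however, one layer is left with no lower bound, and $|L_0\cup L_1|\ge\Delta+1$ cannot pay for it. In fact the statement is \emph{false} for $d=3$.

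Counterexample. Let $\Delta\ge13$, $A=\{a_0,\dots,a_\Delta\}$, $B=\{b_0,\dots,b_\Delta\}$, plus vertices $x,y,x',y'$.
\begin{itemize}
\item On $A$, take the complete digraph minus the arcs $a_ka_{k-1}$ ($1\le k\le\Delta$), $a_ka_{k-2}$ ($2\le k\le\Delta-1$) and $a_1a_{\Delta-2}$.
\item Add the arcs $a_kx$ ($1\le k\le\Delta$), $a_ky'$ ($1\le k\le\Delta-1$), $xa_k$ ($0\le k\le\Delta-2$) and $y'a_k$ ($0\le k\le\Delta-1$).
\item Build the same on $B$ with $x',y$ in the roles of $x,y'$, and finally add the arcs $xy$ and $x'y'$.
\end{itemize}
The deleted arcs exactly compensate the added ones at every vertex, so the digraph is $\Delta$-regular. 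It is strongly connected and has $n=2\Delta+6$. From $v=a_0$ the layers are $L_1=A\setminus\{a_0\}$, $L_2=\{x,y'\}$ and $L_3=\{y\}$. Hence $|\outneighbors_3(v)|=\Delta+4<\frac43\Delta=\min(n,\frac43\Delta)$. Chaining more such blocks in a cycle makes $n$ arbitrarily large, so the standing assumption $\Delta<n^{2/3}$ does not rescue the statement.

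The paper's grouping step has the same hole: grouping all $d_{\max}+1$ layers says nothing about a prefix that ends inside a two-layer group. What your triple inequality does yield is $|\outneighbors_d(v)|\ge\min(n,\frac13 d\Delta)$ for all $d\ge1$, and the stated bound when $d\not\equiv0\pmod 3$. That still gives $|\outneighbors_\beta(v)|\ge\frac13 n/\Delta^{1/2}$, which is all that Section~\ref{sec: delta-regular graphs} uses.
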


\begin{proof}
For $j\ge 0$, let $L_j$ be the set of vertices at distance exactly $j$ from $v$
(the $j$th layer of breadth-first-search). Then $\outneighbors_d(v) = \bigcup_{j\le d} L_j$
for all $d\ge 0$. We claim that if $j\ge 1$ and $L_j\neq\emptyset$ then
\begin{equation} 
\max \braced{\,|L_{j-1}| \,,\, \txthalf |L_j \cup L_{j+1}| \, } \; \ge \; \txtonethird\Delta 
	\label{eqn: regular, 3 layers}
\end{equation}
The proof of inequality~(\ref{eqn: regular, 3 layers}) is by considering two cases.
If $|L_j \cup L_{j+1}|\ge \txttwothirds \Delta$, then~(\ref{eqn: regular, 3 layers}) is trivial.
Otherwise, each vertex in $L_j$ has at least $\onethird\Delta$ edges to 
$\outneighbors_{j-1}(v)$, so there are at least $\onethird\Delta|L_j|$ edges
entering $\outneighbors_{j-1}(v)$. Then the $\Delta$-regularity implies
(via the flow-preservation property)
that there are at least $\onethird\Delta|L_j|$ edges leaving $\outneighbors_{j-1}(v)$,
and, by the definition of layers, they all go from $L_{j-1}$ to $L_j$.
Therefore there is a vertex in $L_{j}$ that has at least $\onethird\Delta$ incoming edges
from $L_{j-1}$, so $|L_{j-1}|\ge \onethird\Delta$, completing the proof of~(\ref{eqn: regular, 3 layers}).

\smallskip
It remains to prove that inequality~(\ref{eqn: regular, 3 layers}) implies the lemma. 
Let $d_\textmax$ be the smallest integer $d$ for which 
$\outneighbors_d(v) = V$. (It exists, by strong connectivity.)
It is sufficient to show that $|\outneighbors_d(v)|\ge \onethird (d+1)\Delta$ holds
for $d\in \braced{1,...,d_\textmax}$. If $d\in\braced{1,2}$, then
$|\outneighbors_d(v)| \ge |\outneighbors_1(v)| = \Delta+1 > \onethird (d+1) \Delta$.
If $d \ge 3$, 
using~(\ref{eqn: regular, 3 layers}) and the just proved estimate for the first
three layers, we can divide all $d_\textmax+1$ layers into groups, where (i) the first
group has two or three layers, (ii) each other group has either one or two layers,
and (iii) in each group the average layer cardinality is at least $\onethird \Delta$.
The lemma follows.
\end{proof}


\myparagraph{Protocol for $\Delta$-regular graphs.} 
Recall that $\SelLatency{n}{\Delta}{\Delta} = \tildeO(\Delta^2)$ denotes the $(\Delta,\Delta)$-latency of the 
$(n,\Delta)$-selector. 

\begin{center}
\framebox{
\begin{minipage}{3.2in}
\begin{tabbing}
aa \= aa \= aa \= aa \= aa \= aa \= \kill
\textbf{Protocol} $\DeltaRegularQuasiGossip(G)$ \\
 \>  $\beta = n/\Delta^{\slthreehalves}$ \\
 
 \> \textit{Phase~1: spreading rumors} \\
 \>\> run $\beta$ times the $(n,\Delta,\Delta)$-selector \\
 \> \textit{Phase~2: flushing rumors}  \\
 \>\> \textbf{while} {$\exists v\in V : \pendingrumors{v}{}\neq \emptyset$}\ \textbf{do} \\
 \>\>\> find $v\in V$ with maximum load $|\pendingrumors{v}{}|$ \ and \  {$\Flush(v)$}
\end{tabbing}
\end{minipage}
}
\end{center}


\myparagraph{Analysis.}
The correctness of the protocol is trivial, because it will flush all rumors in Phase~2.
As for the running time, Phase~1 takes time $\beta \SelLatency{n}{\Delta}{\Delta} = \tildeO(\Delta^\slhalf n)$.

To estimate the running time of Phase~2,
consider a vertex $v$.  By the definition of selectors and the assumption that the in-degree of each vertex 
is $\Delta$, in Phase~1 $v$'s rumor $\rumor_v$ will traverse any path of length at most $\beta$.
Then Lemma~\ref{lem: regular, d-outneighbors} implies that $v$'s rumor will reach at least
$f=  \min(n,\txtonethird (\beta+1)\Delta) \ge \txtonethird n / \Delta^{\slhalf}$ vertices.
The process in Phase~2 is an application of the well-known greedy set-cover algorithm, where the universe is $V$,
and the set family consists of sets $U_v = \braced{u\in V\suchthat \rumor_v\in \pendingrumors{u}{}}$,
for $v\in V$. Each flush operation in Phase~2 corresponds to adding an element to the greedy set cover.
Since initially $|U_v|\ge f$ for each $v$, this greedy set cover has size $\tildeO(n/f)$.
Therefore Phase~2 will execute $\tildeO(n/f)$ flush operations, so it will complete
in time $\tildeO(n^2/f) = \tildeO(\Delta^\slhalf n)$, completing the proof.


\section{Final Comments}
\label{sec: final comments}


The central open problem in the area of ad-hoc radio networks is to determine the
optimal running time of gossiping. While improving the running time of 
$\tildeO(n^\slfourthirds)$ from~\cite{Gasieniec_etal_det_gossip_04} for
general graphs seems difficult, there are some intermediate open problems 
whose solutions could lead to a general algorithm.

Can our protocol be extended to the model where vertex labels are from the
range $[N]$, with $N$ polynomially bounded by $n$? This would require eliminating
the use of $\RoundRobin$, perhaps by replacing it by some combination of selectors.
The main challenge is in traversing high-degree vertices. While 
the number of vertices of degree at least $d$ decreases inversely proportional to $d$,
reducing the number of required flush operations, the selector needed to
handle transmissions into these vertices has delay that grows as $d^2$, making spreading these rumors very slow.

Is it possible to solve gossiping in time $\tildeO(\aveindegree n)$, where $\aveindegree= m/n$ is the
average in-degree, at least for $\aveindegree= n^c$ with $c$ very small? That would generalize the
result for graphs where all in-degrees are at most $\aveindegree$~\cite{Gasieniec_etal_det_gossip_04}.

Can the $\TripleSel()$ protocol help in improving time complexity of other problems in ad-hoc
radio networks? One such problem could be information gathering, where rumors from all nodes need to be
collected in a designated target node. 
The best known running time for this problem is $\tildeO(n^{1.5})$~\cite{Chrobak_etal_information_gathering_2021}.
(The input graph in this problem may not be strongly connected, so one cannot apply
protocols for gossiping.)


\bibliographystyle{plain}
\bibliography{radio_networks.bib}

\begin{thebibliography}{10}

\bibitem{alon_etal_lower_bound_91}
Noga Alon, Amotz Bar-Noy, Nathan Linial, and David Peleg.
\newblock A lower bound for radio broadcast.
\newblock {\em J. Comput. Syst. Sci.}, 43(2):290--298, 1991.

\bibitem{Bruschi_etal_lower_bound_97}
Danilo Bruschi and Massimiliano Del~Pinto.
\newblock Lower bounds for the broadcast problem in mobile radio networks.
\newblock {\em Distributed Computing}, 10(3):129--135, 1997.

\bibitem{Chlebus_etal_broadcast_02}
Bogdan~S. Chlebus, Leszek Gasieniec, Alan Gibbons, Andrzej Pelc, and Wojciech
  Rytter.
\newblock Deterministic broadcasting in ad hoc radio networks.
\newblock {\em Distributed Computing}, 15(1):27--38, 2002.

\bibitem{Christersson_etal_gossiping_bounded_2002}
Malin Christersson, Leszek Gasieniec, and Andrzej Lingas.
\newblock Gossiping with bounded size messages in ad hoc radio networks.
\newblock In {\em Proc. 29th International Colloquium on Automata, Languages
  and Programming ({ICALP}'02)}, pages 377--389, 2002.

\bibitem{Chrobak_etal_information_gathering_2021}
Marek Chrobak, Kevin Costello, and Leszek Gasieniec.
\newblock Information gathering in ad-hoc radio networks.
\newblock {\em Information and Computation}, 281:104769, 2021.

\bibitem{Chrobak_etal_fast_02}
Marek Chrobak, Leszek Gasieniec, and Wojciech Rytter.
\newblock Fast broadcasting and gossiping in radio networks.
\newblock {\em Journal of Algorithms}, 43(2):177--189, 2002.

\bibitem{Chrobak_etal_randomized_04}
Marek Chrobak, Leszek Gasieniec, and Wojciech Rytter.
\newblock A randomized algorithm for gossiping in radio networks.
\newblock {\em Networks}, 43(2):119--124, 2004.

\bibitem{Clementi_etal_01}
Andrea E.~F. Clementi, Angelo Monti, and Riccardo Silvestri.
\newblock Selective families, superimposed codes, and broadcasting on unknown
  radio networks.
\newblock In {\em Proc. 12th Annual Symposium on Discrete Algorithms
  (SODA'01)}, pages 709--718, 2001.

\bibitem{Clementi_dist_bcast_03}
Andrea E.~F. Clementi, Angelo Monti, and Riccardo Silvestri.
\newblock Distributed broadcast in radio networks of unknown topology.
\newblock {\em Theor. Comput. Sci.}, 302(1-3):337--364, 2003.

\bibitem{Czumaj_Davies_faster_16}
Artur Czumaj and Peter Davies.
\newblock Faster deterministic communication in radio networks.
\newblock In {\em Proc. 43rd International Colloquium on Automata, Languages,
  and Programming (ICALP'16)}, pages 139:1--139:14, 2016.

\bibitem{Czumaj_Rytter_broadcast_06}
Artur Czumaj and Wojciech Rytter.
\newblock Broadcasting algorithms in radio networks with unknown topology.
\newblock {\em Journal of Algorithms}, 60(2):115 -- 143, 2006.

\bibitem{DeBonis_etal_selectors_groups_2003}
Annalisa De~Bonis, Leszek G{\k{a}}sieniec, and Ugo Vaccaro.
\newblock Generalized framework for selectors with applications in optimal
  group testing.
\newblock In {\em Automata, Languages and Programming}, pages 81--96, Berlin,
  Heidelberg, 2003. Springer Berlin Heidelberg.

\bibitem{DeMarco_distributed_10}
Gianluca {De Marco}.
\newblock Distributed broadcast in unknown radio networks.
\newblock {\em SIAM Journal on Computing}, 39:2162--2175, 2010.

\bibitem{Erdos_etal_families_85}
P.~Erd\"{o}s, P.~Frankl, and Z.~F\"{u}redi.
\newblock Families of finite sets in which no set is covered by the union of
  $r$ others.
\newblock {\em Israel Journal of Mathematics}, 51(1--2):79--89, 1985.

\bibitem{Gasieniec_deterministic_broadcasting_2016}
Leszek Gasieniec.
\newblock Deterministic broadcasting in radio networks.
\newblock In {\em Encyclopedia of Algorithms}, pages 529--530. Springer US,
  2016.

\bibitem{gasieniec2002adaptive}
Leszek Gasieniec and Andrzej Lingas.
\newblock On adaptive deterministic gossiping in ad hoc radio networks.
\newblock {\em Information Processing Letters}, 83(2):89--93, 2002.

\bibitem{Gasieniec_etal_large_labels_2007}
Leszek Gasieniec, Aris Pagourtzis, Igor Potapov, and Tomasz Radzik.
\newblock Deterministic communication in radio networks with large labels.
\newblock {\em Algorithmica}, 47(1):97--117, 2007.

\bibitem{gkasieniec2007time}
Leszek G{\k{a}}sieniec, Igor Potapov, and Qin Xin.
\newblock Time efficient centralized gossiping in radio networks.
\newblock {\em Theoretical computer science}, 383(1):45--58, 2007.

\bibitem{Gasieniec_etal_det_gossip_04}
Leszek Gasieniec, Tomasz Radzik, and Qin Xin.
\newblock Faster deterministic gossiping in directed ad hoc radio networks.
\newblock In {\em Proc. Scandinavian Workshop on Algorithm Theory (SWAT'04)},
  pages 397--407, 2004.

\bibitem{Itai_randomized_broadcasting_encyclopedia_2016}
Alon Itai.
\newblock Randomized broadcasting in radio networks.
\newblock In {\em Encyclopedia of Algorithms}, pages 1734--1738. Springer US,
  2016.

\bibitem{komlos_greenberg_non-adaptive_mac_1985}
J.~Komlos and A.~Greenberg.
\newblock An asymptotically fast nonadaptive algorithm for conflict resolution
  in multiple-access channels.
\newblock {\em IEEE Transactions on Information Theory}, 31(2):302--306, 1985.

\bibitem{kuschner2024permutation}
Jordan Kuschner, Yugarshi Shashwat, Sarthak Yadav, and Marek Chrobak.
\newblock On permutation selectors and their applications in ad-hoc radio
  networks protocols.
\newblock pages 106--116, 2024.

\bibitem{Kushilevitz_Mansour_lower_bound_98}
Eyal Kushilevitz and Yishay Mansour.
\newblock An {$\Omega(D\log (N/D)$} lower bound for broadcast in radio
  networks.
\newblock {\em SIAM J. Comput.}, 27(3):702--712, 1998.

\bibitem{Liu_Prabhakaran_randomized_02}
Ding Liu and Manoj Prabhakaran.
\newblock On randomized broadcasting and gossiping in radio networks.
\newblock In {\em Proc. 8th Annual Int. Conference on Computing and
  Combinatorics (COCOON'02)}, pages 340--349, 2002.

\bibitem{Peleg_broadcasting_review_2007}
David Peleg.
\newblock Time-efficient broadcasting in radio networks: {A} review.
\newblock In {\em Proc. Distributed Computing and Internet Technology, 4th
  International Conference, {ICDCIT'07}}, pages 1--18, 2007.

\bibitem{Xu_det_gossip_03}
Ying Xu.
\newblock An {$O(n^{1.5})$} deterministic gossiping algorithm for radio
  networks.
\newblock {\em Algorithmica}, 36(1):93--96, 2003.

\end{thebibliography}


\end{document}